\documentclass[journal]{IEEEtran}
\usepackage{amsmath}
\usepackage{amsfonts}
\usepackage{algorithm}
\usepackage{algorithmic}
\usepackage{algorithm}
\usepackage{array}
\usepackage{textcomp}
\usepackage{stfloats}
\usepackage{url}
\usepackage{verbatim}
\usepackage{graphicx}
\usepackage{cite}
\hyphenation{op-tical net-works semi-conduc-tor IEEE-Xplore}

\usepackage{amsthm}
\usepackage[T1]{fontenc}
\usepackage{amssymb}
\usepackage{epsfig}
\usepackage{psfrag}
\usepackage{latexsym}
\usepackage{lettrine}
\usepackage{color}
\usepackage{multirow}
\usepackage{subfigure}

\usepackage{bm}
\usepackage{array}
\usepackage{hyperref}					
\hypersetup{colorlinks,
	linkcolor=blue,%
	anchorcolor=blue,
    citecolor=blue}


\newtheorem{thm}{Theorem}
\newtheorem{assumption}{Assumption}

\newtheorem{lemma}{Lemma}
\newtheorem{corollary}{Corollary}

\newtheorem{proposition}{Proposition}

\begin{document}
\title{Optimal Power Allocation for OFDM-based Ranging Using Random Communication Signals}
\author{
Ying Zhang,~\IEEEmembership{Graduate Student Member,~IEEE}, Fan Liu,~\IEEEmembership{Senior Member,~IEEE}, Tao Liu,~\IEEEmembership{Member,~IEEE},

and 
Shi Jin,~\IEEEmembership{Fellow,~IEEE}

\thanks{Y. Zhang and T. Liu are with the School of Automation and Intelligent Manufacturing, Southern University of Science
and Technology, Shenzhen 518055, China. (email: zhangying2024@mail.sustech.edu.cn; liut6@sustech.edu.cn).}
\thanks{F. Liu and S. Jin are with the National Mobile Communications Research Laboratory, Southeast University, Nanjing 210096, China. (email: f.liu@ieee.org; jinshi@seu.edu.cn).}

}

\maketitle
\begin{abstract} High-precision ranging plays a crucial role in future 6G Integrated Sensing and Communication (ISAC) systems. To improve the ranging performance while maximizing the resource utilization efficiency, future 6G ISAC networks have to reuse data payload signals for both communication and sensing, whose inherent randomness may deteriorate the ranging performance. To address this issue, this paper investigates the power allocation (PA) design for an OFDM-based ISAC system under random signaling, aiming to reduce the ranging sidelobe level of both periodic and aperiodic auto-correlation functions (P-ACF and A-ACF) of the ISAC signal. Towards that end, we first derive the closed-form expressions of the average squared P-ACF and A-ACF, and then propose to minimize the expectation of the integrated sidelobe level (EISL) under arbitrary constellation mapping. We then rigorously prove that the uniform PA scheme achieves the global minimum of the EISL for both P-ACF and A-ACF. As a step further, we show that this scheme also minimizes the P-ACF sidelobe level at every lag. Moreover, we extend our analysis to the P-ACF case with frequency-domain zero-padding, which is a typical approach to improve the ranging resolution. We reveal that there exists a tradeoff between sidelobe level and mainlobe width, and propose a project gradient descent algorithm to seek a locally optimal PA scheme that reduces the EISL. Finally, we validate our theoretical findings through extensive simulation results, confirming the effectiveness of the proposed PA methods in reducing the ranging sidelobe level for random OFDM signals.
\end{abstract}

\begin{IEEEkeywords} Integrated Sensing and Communication, OFDM, ranging sidelobe, power allocation, frequency zero-padding

\end{IEEEkeywords}

\vspace{-0.5em}
\section{Introduction}
\IEEEPARstart{T}{he} International Telecommunication Union (ITU) has recently endorsed a global vision for 6G, identifying Integrated Sensing and Communication (ISAC) technology as one of six critical use cases for next-generation networks \cite{ITU2023}. To date, ISAC has been widely recognized as a key enabler for a number of emerging 6G applications, including autonomous vehicles, Internet of Things (IoT) networks, unmanned aerial vehicle (UAV) networks, and the low-altitude economy, which require simultaneous communication and sensing functionalities\cite{challengesfor6G2023,liufan2022integrated,Liufan2020IToC}. 

The ISAC system is a unified framework that provides both wireless communication and radar sensing functions. ISAC significantly enhances spectrum utilization and mitigates spectrum conflicts between radar and communication systems. Furthermore, it reduces the size and energy consumption of the equipment. Due to its high spectrum efficiency and low hardware costs, ISAC technology has garnered substantial attention from both academia and industry \cite{Wei2023,Wei2023TVT,7855671,8828023,zhang2021overview,Lishunyu2024TVT}. ISAC waveforms, as the basis of ISAC technology, have been studied extensively \cite{9924202,9724187,ZhangfanJSAC2024,Maotianqi2022TCOMM}. In general, waveform design methodologies for ISAC can be categorized into sensing-centric design, communication-centric design, and joint design approaches \cite{liufan2022integrated}. For the sensing-centric design, the priority of sensing functionality is higher than that of the communication counterpart. In such cases, one typically focuses on maximizing sensing performance or implementing basic communication capabilities over existing sensing waveforms or infrastructures\cite{R.M.M_1963},\cite{MR_2003}. Communication-centric schemes, on the other hand, rely on the existing communication waveforms and standard-compatible protocols, such as single-carrier (SC) signals, orthogonal time frequency space (OTFS) signals and orthogonal
frequency division multiplexing (OFDM) signals \cite{8424569},\cite{Sturm2011}. Different from the above two methods, the joint design strategy creates new ISAC waveforms, aiming to achieve an optimized balance between sensing and communication capabilities\cite{Liufan2018optimalwavedesign},\cite{liu-2021-cramer}.

The communication-centric approach is expected to be more advantageous for ISAC networks due to its lower implementation complexity\cite{Wei2023},\cite{Wei2023TVT}. Towards that end, numerous researches have been centering on communication-centric ISAC transmission, with a particular focus on waveform design and performance analysis \cite{sturm2011waveform, Chen2021, Zeng2020, Gaudio2020}. However, a critical challenge arises because communication data symbols must inherently exhibit \textit{randomness} to carry meaningful information, which can adversely affect sensing accuracy. This phenomenon has recently been characterized as the deterministic-random tradeoff (DRT), a fundamental limitation in ISAC system design \cite{xiong2023fundamental}. As a result, developing optimal communication-centric waveforms that mitigate sensing performance degradation becomes a crucial research objective. Against this backdrop, our recent work \cite{liu2024ofdmachieveslowestranging} took an initial step towards answering this fundamental question: \textit{What is the optimal communication-centric ISAC waveform under random signaling?} In \cite{liu2024ofdmachieveslowestranging}, we established the superiority of OFDM modulation in achieving the lowest ranging sidelobe. Specifically, we demonstrated that among all communication waveforms with a cyclic prefix (CP), OFDM is the only globally optimal waveform that achieves the lowest ranging sidelobe for both quadrature amplitude modulation (QAM) and phase shift keying (PSK) constellations. This holds true in terms of both the expectation of the integrated sidelobe level (EISL) and the expected sidelobe level at each lag of the periodic auto-correlation function (P-ACF). Furthermore, we proved that for communication waveforms without a CP, OFDM serves as a locally optimal waveform for QAM/PSK constellations, achieving a local minimum of the EISL of the aperiodic auto-correlation function (A-ACF).

Although the superiority of OFDM modulation over other waveforms in reducing ranging sidelobe level has been demonstrated in \cite{liu2024ofdmachieveslowestranging}, it generally assumed a uniform power distribution across transmitted symbols, leaving the optimal power allocation (PA) strategies for random ISAC signaling unexplored. Numerous studies have investigated PA schemes for OFDM-based ISAC systems \cite{Zhu2024powerallocation, YaoRubing_2023, Shi2019, Chen2023ICL, wangfangzhou_TSP}. In \cite{Zhu2024powerallocation}, the authors proposed three distinct PA schemes aimed at maximizing the mutual information (MI) of radar sensing under different scenarios. In \cite{YaoRubing_2023}, the authors designed an optimization algorithm to obtain the PA design over subcarriers, thus the sensing MI of the OFDM-based ISAC waveform is improved. The authors in \cite{Shi2019} conceived a power minimization-based joint subcarrier assignment and power allocation (PM-JSAPA) approach for ISAC systems. Similarly, a joint subcarrier and PA method for the integrated OFDM with interleaved subcarriers (OFDM-IS) waveform was proposed in \cite{Chen2023ICL}. The proposed method minimizes the autocorrelation sidelobe level of the integrated OFDM-IS waveform, using a cyclic minimization algorithm to solve the optimization problem. In \cite{wangfangzhou_TSP}, the authors proposed two designs for a radar-communication spectrum sharing problem by maximizing the output signal-to-interference-plus-noise ratio (SINR) at the radar receiver while maintaining certain communication throughput and power constraints. Although extensive studies on OFDM PA have been conducted for deterministic signaling, none have rigorously addressed the issue of PA design for random signaling.

In this paper, we investigate the PA design for random OFDM communication signals, in the context of target range estimation under a monostatic ISAC setup. In this setup, the ISAC transmitter (Tx) sends out an OFDM communication signal modulated with random symbols for dual purposes of delivering information to communication users and multi-target ranging. The signal is received by communication users, and simultaneously reflected from distant targets back to a sensing receiver (Rx) that is colocated with the ISAC Tx. As a result, despite its inherent randomness, the sensing Rx has full knowledge of the ISAC signal. To evaluate the ranging performance under matched filtering algorithms, we adopt the auto-correlation function (ACF) of the ISAC signal as a performance indicator, considering both periodic and aperiodic convolutions for signals with and without CP, respectively. For clarity, we summarize our main contributions as follows:

\begin{itemize}
\item [\textbullet] We develop a generic framework to analyze the ranging
performance of OFDM-based random ISAC signals. Specifically, we analyze P-ACF and A-ACF for the considered signal by deriving their closed-form expressions.

\item [\textbullet] We prove that the uniform PA scheme achieves the lowest EISL for both P-ACF and A-ACF. In particular, it also achieves the lowest average sidelobe level at every lag of the P-ACF.

\item[\textbullet] We derive a closed-form expression for P-ACF with PA and frequency zero-padding, where the latter is a classical method to improve the range resolution of the matched filtering output. We demonstrate that in the case of frequency zero-padding, uniform PA does not yield the lowest possible ranging sidelobe. To address this, we formulate an optimization problem and develop the Projected Gradient Descent (PGD) algorithm to conceive a PA scheme that minimizes the ranging sidelobe. 

\item[\textbullet] By comparing ranging performances, we observe that while the PA scheme obtained via the PGD algorithm achieves a lower sidelobe level, it results in a slight increase in the mainlobe width compared to the uniform PA scheme. Building upon this observation, we introduce a new constraint to limit the mainlobe width and employ the Successive Convex Approximation (SCA) algorithm to solve the corresponding optimization problem. Our results indicate the existence of a tradeoff between the mainlobe width and the sidelobe level.
\end{itemize}

The remainder of this paper is organized as follows. Section \ref{sec2} introduces the system model of the ISAC system under consideration and the corresponding performance metrics. Section \ref{Sec_III} derives the closed-form expressions for both P-ACF and A-ACF of random ISAC signals using OFDM modulation and PA schemes. The optimal PA strategies of P-ACF and A-ACF with OFDM modulation for random ISAC signals are discussed in Section \ref{Sec_IV}. Section \ref{Sec V} conceives optimal PA strategies for P-ACF with frequency zero-padding and OFDM modulation in random ISAC signals. Section \ref{Sec_VI} presents simulation results to validate the theoretical analysis. Finally, Section \ref{Sec_VII} concludes the paper.

{\emph{Notations}}: Matrices are denoted by bold uppercase letters (e.g., $\mathbf{U}$), vectors are represented by bold lowercase letters (e.g., $\mathbf{x}$), and scalars are denoted by normal font (e.g., $N$); The $n$th entry of a vector $\mathbf{s}$, and the $(m,n)$-th entry of a matrix $\mathbf{A}$ are denoted as $s_n$ and $a_{m.n}$, respectively; $\otimes$ and $\operatorname{vec}\left(\cdot\right)$ denote the Kronecker product and the vectorization, $\left(\cdot\right)^T$, $\left(\cdot\right)^H$, and $\left(\cdot\right)^*$ stand for transpose, Hermitian transpose, and the complex conjugate of the matrices;  $\ell_p$ norm is written as $\left\| \cdot\right\|_p$, and $\mathbb{E}(\cdot)$ represents the expectation operation; The notation Diag($\mathbf{a}$) denotes the diagonal matrix obtained by placing the entries of $\mathbf{a}$ on its main diagonal.

\section{System Model} \label{sec2}
\subsection{ISAC Signal Model}

This paper considers a monostatic ISAC system. The ISAC Tx emits a signal modulated with random communication symbols, which is received by a communication Rx and simultaneously reflected back to a sensing Rx by one or more targets at varying ranges. The sensing Rx, collocated with the ISAC Tx, performs matched filtering to estimate target delay parameters using the known random ISAC signal.

Let $\mathbf{s} = \left[s_1, s_2, \dots, s_N \right]^\mathrm{T} \in \mathbb{C}^{N}$ represent the $N$ communication symbols to be transmitted. When the OFDM modulation is adapted and the power allocation is considered, the discrete-time domain signal is given by
    \begin{equation}\label{Signal with power allocation}
        \mathbf{x} = \mathbf{F}_N^H\mathbf{B}\mathbf{s},
    \end{equation} 
where $\mathbf{F}_N$ is the normalized discrete Fourier transform (DFT) matrix of size $N$, and the matrix $\mathbf{B}$ is the PA matrix, in the form of
    \begin{equation}\label{B}
    \begin{aligned}
        \mathbf{B} = \mathrm{Diag} \left(\left[\sqrt{P_1},\sqrt{P_2},...,\sqrt{P_N}\right]^T\right),   
    \end{aligned}        
    \end{equation}
where $P_i$ represents the power allocated to $i$th subcarrier, with the constraints $\sum_{i=1}^N P_i = N$ and $P_i \ge 0$, for all $i$.

\begin{assumption}[Unit Power and Rotational Symmetry]
    We focus on constellations with a unit power, zero mean, and zero pseudo-variance, defined as
    \begin{equation}
        \mathbb{E}( \left|s_n \right|^2) = 1, \quad\mathbb{E}(s_n) = 0,\quad\mathbb{E}(s_n^2) = 0,\quad \forall n.
    \end{equation}
\end{assumption}

\noindent The unit-power normalization in Assumption 1 enables fair sensing and communication performance under varying constellation formats. Moreover, the properties of zero mean and zero pseudo-variance hold for most practical constellations (e.g., PSK and QAM families, excluding BPSK and 8-QAM).

To proceed, let us further define the \textit{kurtosis} of a constellation as
\begin{equation}
    \mu_4 = \frac{\mathbb{E}\left\{|s_n-\mathbb{E}(s_n)|^4\right\}}{\mathbb{E}\left\{|s_n-\mathbb{E}(s_n)|^2\right\}^2}.
\end{equation}
Under unit power and zero mean conditions, kurtosis reduces to the fourth-order moment $\mu_4= \mathbb{E}(|s_n|^4)$. Using the Power Mean Inequality, we arrive at
\begin{equation}
    \mathbb{E}(|s_n|^4) \geq \left( \mathbb{E}(|s_n|^2) \right)^2 = 1.
\end{equation}
Specifically, all PSK constellations exhibit $\mu_4 = 1$, while all QAM constellations have $1 \leq \mu_4 \leq 2$.

\subsection{Sensing Performance Metric}

In radar systems, the Ambiguity Function (AF) of the signal is an important performance indicator for sensing \cite{Levanon2013RADARS}. 
    \begin{equation}\label{self-AF}
        \mathcal{X}(\tau,f_d) = \int_{-\infty}^{\infty} x(t) x^*(t-\tau) e^{j 2 \pi f_d t} dt ,
    \end{equation}
where $x(t)$ is the radar signal, $\tau$ is the time delay and $f_d$ is Doppler shift. For the sake of simplicity, this paper considers only the zero-Doppler slice and sets aside the characterization of the complete AF as our future work. By letting $f_d = 0$, \eqref{self-AF} can be rewritten as
    \begin{equation}\label{self-AF-2}
        \mathcal{X}(\tau,0) = \int_{-\infty}^{\infty} x(t) x^*(t-\tau) dt ,
    \end{equation}
which is known as the ACF of the signal $x(t)$, and is well-recognized as a key performance indicator for ranging. In a discrete form, the ACF can be defined as the linear or periodic self-convolution of the discrete signal $\mathbf{x}$, depending on whether a CP is added.

\subsubsection{Aperiodic ACF (A-ACF)}
\begin{equation}
    r_k = \mathbf{x}^H\mathbf{J}_k\mathbf{x} = r_{-k}^*, \quad k = 0,1,\ldots, N-1,
\end{equation}
where $\mathbf{J}_k$ is the $k$th shift matrix in the form of
\begin{equation}
    \mathbf{J}_k = {\begin{bmatrix}
   {\mathbf{0}}&{{{\mathbf{I}}_{N - k}}} \\ 
  {\mathbf{0}}&{\mathbf{0}}        
    \end{bmatrix}}
, \quad
    \mathbf{J}_{-k} = \mathbf{J}_k^T=\begin{bmatrix}
   {\mathbf{0}}&{\mathbf{0}} \\ 
  {{{\mathbf{I}}_{N - k}}}&{\mathbf{0}}        
    \end{bmatrix}.
\end{equation}

\subsubsection{Periodic ACF (P-ACF)}
\begin{equation}  \label{P-ACF self acco}
    \widetilde{r}_k = \mathbf{x}^H\widetilde{\mathbf{J}}_k\mathbf{x} = \widetilde{r}_{N-k}^*, \quad k = 0,1,\ldots, N-1,
\end{equation}
where $\widetilde{\mathbf{J}}_k$ is defined as the $k$th periodic shift matrix \cite{Stoica2009}, given as
\begin{equation}
    \widetilde{\mathbf{J}}_k = \begin{bmatrix}
  {\mathbf{0}}&{{{\mathbf{I}}_{N - k}}} \\ 
  {\mathbf{I}_k}&{\mathbf{0}}         
    \end{bmatrix}, \quad
    \widetilde{\mathbf{J}}_{-k} = \widetilde{\mathbf{J}}_{N-k} = \begin{bmatrix}
   {\mathbf{0}}&{{\mathbf{I}_k}} \\ 
  {{{\mathbf{I}}_{N - k}}}&{\mathbf{0}}        
    \end{bmatrix}.
\end{equation}

In both cases, the sidelobe level of the ACF plays a critical role.  Let us take the A-ACF as an example. The sidelobe of $r_k$ is defined as
    \begin{equation}
            |r_k|^2  = |\mathbf{x}^H \mathbf{J}_k \mathbf{x}|^2 = |r_\mathit{-k}|^2, \quad {k = 1,2,..,N-1} ,  
    \end{equation}
where $|r_0|^2$=$|\mathbf{x}^H\mathbf{x}|^2$= $|\mathbf{s}^H \mathbf{B}^H \mathbf{F}_N \mathbf{F}_N^H \mathbf{B} \mathbf{s}|^2$ = $\left| \sum_{i=1}^N P_i \left| s_i\right|^2  \right|^2$  is the mainlobe of the ACF. The integrated sidelobe level (ISL) can be expressed as \cite{Song2015}, \cite{Stoica2009TSP}
    \begin{equation}\label{ISL}
        \mathrm{ISL} = \sum_{k=1}^{N-1} \left|r_k\right|^2.    
    \end{equation}

Given the random nature of the signal $\mathbf{x}$, we define the average sidelobe level as the primary sensing performance metric. This metric is expressed as
    \begin{equation}
            \mathbb{E}(|r_k|^2)  = \mathbb{E} (|\mathbf{x}^H \mathbf{J}_k \mathbf{x}|^2 ) = \mathbb{E} ( |\mathbf{s}^H \mathbf{B}^H \mathbf{F}_N \mathbf{J}_k \mathbf{F}_N^H \mathbf{B} \mathbf{s}| ^2 ), \quad \forall k.
    \end{equation}
When $k$ is zero, $\mathbb{E}(|r_0|^2)$ denotes the average mainlobe level. Correspondingly, the EISL is given by
    \begin{equation}\label{EISL}
        \mathrm{EISL} = \sum_{k=1}^{N-1}\ \mathbb{E}(|r_k|^2) .    
    \end{equation}
Consequently, seeking the optimal PA strategy is equivalent to solving the following stochastic optimization problem:
   \begin{equation} \label{stochastic-optimization-problem-1}
       \mathop{\text{min}}\limits_{\sum_{i=1}^N P_i = N, P_i \geq 0}  \frac{\sum_{k=1}^{N-1} \mathbb{E} \left(\left |\mathbf{s}^H \mathbf{B}^H \mathbf{F}_N \mathbf{J}_k \mathbf{F}_N^H \mathbf{B} \mathbf{s} \right|^2  \right)} { \mathbb{E}\left(\left| \sum_{i=1}^N P_i \left| s_i\right|^2  \right|^2\right)  },   
   \end{equation}
which is the EISL normalized by the expected mainlobe level.

\section{Statistical Characterizations of P-ACF and A-ACF}\label{Sec_III}
In \cite{liu2024ofdmachieveslowestranging}, the closed-form expressions of P-ACF and A-ACF without PA have been theoretically analyzed. In this section, we systematically investigate the closed-form expressions of both P-ACF and A-ACF with PA in OFDM systems. 

\subsection{The P-ACF Case} \label{The P-ACF Case with power allocation}
In this section, we present the main result of the P-ACF case with PA, which is generally easier to tackle than its A-ACF counterpart. Let us first demonstrate the basic structure of the periodic shift matrix in Lemma 1.

    \begin{lemma}              
        The periodic shift matrix can be decomposed as
            \begin{equation} \label{J_k fenjie}
                \widetilde {\mathbf{J}}_k = \sqrt{N} \mathbf{F}_N^H \mathrm{Diag}(\mathbf{f}_\mathit{N-k+1}) \mathbf{F}_N ,
            \end{equation}
        where $\mathbf{f}_k$ is the $\mathnormal{k}$th column of $\mathbf{F}_N$.        
    \end{lemma}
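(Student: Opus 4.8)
The plan is to prove the identity by a direct entrywise computation, exploiting the elementary fact that $\widetilde{\mathbf{J}}_k$ is a circulant (cyclic-shift) matrix and is therefore diagonalized by the DFT. I adopt the standard convention $(\mathbf{F}_N)_{a,b}=N^{-1/2}\exp(-j2\pi(a-1)(b-1)/N)$, so that the $b$th column has entries $(\mathbf{f}_b)_a=N^{-1/2}\exp(-j2\pi(a-1)(b-1)/N)$; column indices are read modulo $N$, so that the degenerate case $k=0$ (where $\widetilde{\mathbf{J}}_0=\mathbf{I}$ and $\mathbf{f}_{N+1}\equiv\mathbf{f}_1$) is covered as well.

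First I would read off from the block form of $\widetilde{\mathbf{J}}_k$ that $\widetilde{\mathbf{J}}_k\mathbf{x}=[x_{k+1},\dots,x_N,x_1,\dots,x_k]^{\mathrm{T}}$, i.e. it implements a cyclic shift by $k$; equivalently $(\widetilde{\mathbf{J}}_k)_{m,n}=1$ when $n\equiv m+k\ (\mathrm{mod}\ N)$ and $0$ otherwise. Next I would simplify the diagonal factor: since $N-k\equiv -k\ (\mathrm{mod}\ N)$, we have $(\mathbf{f}_{N-k+1})_a=N^{-1/2}\exp(-j2\pi(a-1)(N-k)/N)=N^{-1/2}\exp(j2\pi(a-1)k/N)$, so the diagonal of $\mathrm{Diag}(\mathbf{f}_{N-k+1})$ consists of the $k$th powers of the $N$th roots of unity, up to the $N^{-1/2}$ scaling. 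Substituting into the triple product, the three $N^{-1/2}$ factors together with the prefactor $\sqrt{N}$ leave an overall $1/N$, and the exponents combine to give
\begin{equation}
\bigl(\sqrt{N}\,\mathbf{F}_N^H\,\mathrm{Diag}(\mathbf{f}_{N-k+1})\,\mathbf{F}_N\bigr)_{m,n}=\frac{1}{N}\sum_{a=1}^{N}\exp\!\left(\frac{j2\pi(a-1)(m-n+k)}{N}\right).
\end{equation}
This is a geometric sum of $N$th roots of unity, equal to $1$ precisely when $m-n+k\equiv 0\ (\mathrm{mod}\ N)$ and $0$ otherwise --- exactly the pattern found in the first step. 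Matching the two expressions entrywise completes the proof.

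The computation is routine; the only place that needs care is the index bookkeeping --- the $1$- versus $0$-indexing of $\mathbf{F}_N$, the sign of the exponent in the chosen DFT convention, the reduction $\exp(-j2\pi(a-1)(N-k)/N)=\exp(j2\pi(a-1)k/N)$, and the modulo-$N$ reading of the column index $N-k+1$. A slicker alternative that sidesteps the entrywise algebra is to note that $\widetilde{\mathbf{J}}_k=(\widetilde{\mathbf{J}}_1)^k$ and that the columns of $\mathbf{F}_N^H$ are eigenvectors of the elementary cyclic shift $\widetilde{\mathbf{J}}_1$ with eigenvalues $\exp(j2\pi(b-1)/N)$; raising to the $k$th power gives $\widetilde{\mathbf{J}}_k=\mathbf{F}_N^H\,\mathrm{Diag}\bigl(\exp(j2\pi(b-1)k/N)\bigr)\mathbf{F}_N$, and recognizing $\exp(j2\pi(b-1)k/N)=\sqrt{N}\,(\mathbf{f}_{N-k+1})_b$ yields the claimed form.
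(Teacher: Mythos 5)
Your proof is correct. Note that the paper itself never proves Lemma 1: it is stated as a known fact (the standard result that cyclic-shift/circulant matrices are diagonalized by the DFT, with the periodic shift matrix defined as in Stoica's work), so there is no paper proof to compare against; your argument supplies exactly the verification the paper leaves implicit. Both of your routes are sound: the entrywise computation correctly identifies $(\widetilde{\mathbf{J}}_k)_{m,n}=1$ iff $n\equiv m+k \pmod N$ and matches it against the geometric sum $\frac{1}{N}\sum_{a=1}^{N}e^{j2\pi(a-1)(m-n+k)/N}$, and the eigenvector argument via $\widetilde{\mathbf{J}}_k=(\widetilde{\mathbf{J}}_1)^k$ is the slicker standard derivation. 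Importantly, your conventions — the DFT entry $(\mathbf{F}_N)_{a,b}=N^{-1/2}e^{-j2\pi(a-1)(b-1)/N}$, one-based indexing, the reduction $e^{-j2\pi(a-1)(N-k)/N}=e^{j2\pi(a-1)k/N}$, and the modulo-$N$ reading of $\mathbf{f}_{N-k+1}$ covering $k=0$ — are precisely those the paper relies on when it uses the lemma to obtain $\widetilde{r}_k=\sum_{n=1}^{N}|c_n|^2 e^{j2\pi k(n-1)/N}$, so your proof is fully consistent with the downstream derivations.
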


The P-ACF with PA can be expressed as
    \begin{equation}\label{rk with power allocation-IDFT}
    \begin{split}
        \widetilde{r}_k   = \mathbf{x}^H \widetilde{\mathbf{J}}_k \mathbf{x} 
             & = \mathbf{c}^H \mathbf{F}_N \widetilde{\mathbf{J}}_k \mathbf{F}_N^H\mathbf{c},
    \end{split}        
    \end{equation}
where $k = 0,1,...,N-1$ and $\mathbf{c}$ is
      \begin{align}\label{c}
        \mathbf{c} = \mathbf{B}  \mathbf{s} = \left[s_1 \sqrt{P_1}, s_2 \sqrt{P_2}, ..., s_N \sqrt{P_N} \right]^T.
      \end{align} 
Using Lemma 1, \eqref{rk with power allocation-IDFT} can be recast as 
    \begin{equation}
        \begin{split}\label{rk with power allocation-IDFT 2}
            \widetilde{r}_k  &= \sqrt{N} \mathbf{c}^H  \mathrm{Diag} \left(\mathbf{f}_\mathit{N-k+1}\right) \mathbf{c} \\
            & = \sum_{n=1}^N \left|c_n\right|^2 e^\frac{-j2\pi (N-k)(n-1)}{N}\\
            &= \sum_{n=1}^N \left|c_n\right|^2 e^\frac{j2\pi k(n-1)}{N} .
        \end{split}
    \end{equation}

\begin{proposition}
    The average squared P-ACF with PA is
    \begin{equation}\label{EAF with power allocation-IDFT 1}
        \mathbb{E} \left( \left|\widetilde{r}_k \right|^2 \right) = \left(\mathbf{\mu}_4-1\right) \sum_{i=1}^N P_i^2 + \left| \sum_{n=1}^N P_n e^\frac{j2\pi kn}{N} \right|^2 .   
    \end{equation}

\end{proposition}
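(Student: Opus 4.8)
The plan is to start directly from the closed-form expression \eqref{rk with power allocation-IDFT 2}, which already gives $\widetilde{r}_k$ as a finite exponential sum, and to compute the second moment by brute-force expansion. Since $c_n = s_n\sqrt{P_n}$ implies $|c_n|^2 = P_n|s_n|^2$, I would first write $\widetilde{r}_k = \sum_{n=1}^N P_n|s_n|^2 e^{j2\pi k(n-1)/N}$ and then expand
\[
|\widetilde{r}_k|^2 = \widetilde{r}_k\widetilde{r}_k^{*} = \sum_{n=1}^N\sum_{m=1}^N P_n P_m\, |s_n|^2 |s_m|^2\, e^{j2\pi k(n-m)/N},
\]
noting that the constant phase factor $e^{-j2\pi k/N}$ coming from the $(n-1)$ shift cancels in the modulus, so replacing $n-1$ by $n$ is harmless.

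Next I would bring the expectation inside the double sum and split it into the diagonal part ($n=m$) and the off-diagonal part ($n\neq m$). Using the standing assumption that the data symbols are independent across subcarriers together with Assumption 1, the off-diagonal terms factor as $\mathbb{E}(|s_n|^2)\mathbb{E}(|s_m|^2) = 1$, while each diagonal term contributes $\mathbb{E}(|s_n|^4) = \mu_4$ (recall that under unit power and zero mean the kurtosis coincides with the fourth moment). This produces
\[
\mathbb{E}\!\left(|\widetilde{r}_k|^2\right) = \mu_4\sum_{n=1}^N P_n^2 \;+\; \sum_{\substack{n,m=1\\ n\neq m}}^N P_n P_m\, e^{j2\pi k(n-m)/N}.
\]

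Finally, I would recognize the off-diagonal sum as a completed square: since $\sum_{n,m=1}^N P_n P_m e^{j2\pi k(n-m)/N} = \bigl|\sum_{n=1}^N P_n e^{j2\pi kn/N}\bigr|^2$ and the diagonal of this full sum equals $\sum_{n=1}^N P_n^2$, the off-diagonal part equals $\bigl|\sum_{n=1}^N P_n e^{j2\pi kn/N}\bigr|^2 - \sum_{n=1}^N P_n^2$. Substituting and combining with the $\mu_4\sum_n P_n^2$ term yields exactly $(\mu_4-1)\sum_{i=1}^N P_i^2 + \bigl|\sum_{n=1}^N P_n e^{j2\pi kn/N}\bigr|^2$, which is the claimed identity.

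The computation is essentially routine; the only point demanding care is the bookkeeping that separates the $n=m$ terms (which see the fourth-order moment $\mu_4$) from the $n\neq m$ terms (which see only the second-order moments), together with the implicit use of statistical independence of the symbols across subcarriers. I would also remark that the zero-mean and zero-pseudo-variance conditions of Assumption 1 are not actually invoked for this particular identity—only $\mathbb{E}(|s_n|^2)=1$ and the definition of $\mu_4$ are needed—although those conditions will be essential for the subsequent A-ACF analysis.
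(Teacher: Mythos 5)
Your proof is correct and follows essentially the same route as the paper's Appendix A: expand $|\widetilde{r}_k|^2$ into the double sum over $P_nP_m|s_n|^2|s_m|^2$, separate the $n=m$ terms (contributing $\mu_4$) from the $n\neq m$ terms (contributing $1$ by independence), and complete the square by adding back the diagonal, which yields $(\mu_4-1)\sum_i P_i^2 + \bigl|\sum_n P_n e^{j2\pi kn/N}\bigr|^2$. Your added remarks about the harmless index shift and the unused zero-mean/zero-pseudo-variance conditions are accurate but do not change the argument.
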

\begin{proof}
    See Appendix \ref{Proof of Proposition1}.
\end{proof}

As a consequence, the average mainlobe of the P-ACF with PA is
    \begin{equation}\label{EAF with power allocation-IDFT main}
        \ \mathbb{E}\left( \left|\widetilde{r}_0\right|^2\right) = (\mathbf{\mu}_4-1) \sum_{i=1}^N P_i^2 + N^2 .     
    \end{equation}

\begin{corollary}
The EISL of the P-ACF with PA is
    \begin{equation}\label{EAF with power allocation-IDFT sibe}
        \sum_{k=1}^{N-1}\ \mathbb{E} \left( \left| \widetilde{r}_k \right|^2 \right) = \left[(N-1)\mu_4+1\right] \sum_{i=1}^N P_i^2 - N^2 .      
    \end{equation}     
\end{corollary}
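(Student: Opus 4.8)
The plan is to obtain the EISL simply by summing the per-lag expression of Proposition~1 over $k=1,\dots,N-1$ and then collapsing the remaining trigonometric sum via the orthogonality of DFT exponentials. First, since the term $(\mu_4-1)\sum_{i=1}^N P_i^2$ in \eqref{EAF with power allocation-IDFT 1} does not depend on $k$, summing it over the $N-1$ nonzero lags simply contributes $(N-1)(\mu_4-1)\sum_{i=1}^N P_i^2$; it therefore remains to evaluate
\[
S := \sum_{k=1}^{N-1}\left|\sum_{n=1}^N P_n e^{\frac{j2\pi kn}{N}}\right|^2.
\]

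The only computational step is the evaluation of $S$. I would add back the $k=0$ term, expand the modulus squared, and interchange the order of summation:
\[
\sum_{k=0}^{N-1}\left|\sum_{n=1}^N P_n e^{\frac{j2\pi kn}{N}}\right|^2 = \sum_{n=1}^N\sum_{m=1}^N P_n P_m \sum_{k=0}^{N-1} e^{\frac{j2\pi k(n-m)}{N}}.
\]
Since $1\le n,m\le N$, the inner geometric sum equals $N$ when $n=m$ and $0$ otherwise, so the double sum collapses to $N\sum_{n=1}^N P_n^2$. The $k=0$ term is $\left|\sum_{n=1}^N P_n\right|^2 = N^2$ by the power constraint $\sum_{i=1}^N P_i = N$, hence $S = N\sum_{n=1}^N P_n^2 - N^2$.

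Finally, combining the two contributions gives $\sum_{k=1}^{N-1}\mathbb{E}(|\widetilde r_k|^2) = \big[(N-1)(\mu_4-1)+N\big]\sum_{i=1}^N P_i^2 - N^2$, and simplifying the coefficient via $(N-1)(\mu_4-1)+N = (N-1)\mu_4+1$ yields the claimed identity. There is no genuine obstacle here: the argument is a one-line consequence of Proposition~1 together with a Parseval-type orthogonality identity. The only points meriting care are that the index range $\{1,\dots,N\}$ rules out any spurious aliasing (so $n\equiv m \pmod N$ forces $n=m$) and that the power-normalization constraint is precisely what makes the $k=0$ contribution equal $N^2$.
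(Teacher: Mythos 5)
Your proof is correct and follows essentially the same route as the paper: sum the per-lag expression of Proposition~1 over the nonzero lags, use the constraint $\sum_{i=1}^N P_i = N$ to evaluate the $k=0$ contribution as $N^2$, and collapse $\sum_{k=0}^{N-1}\bigl|\sum_{n=1}^N P_n e^{j2\pi kn/N}\bigr|^2 = N\sum_{n=1}^N P_n^2$ (the paper cites Parseval's theorem where you derive the same identity directly from DFT orthogonality). The only cosmetic difference is that the paper organizes the computation as $\sum_{k=0}^{N-1}\mathbb{E}(|\widetilde r_k|^2)-\mathbb{E}(|\widetilde r_0|^2)$, which is equivalent to your add-back-the-$k=0$-term step.
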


\begin{proof}
    See Appendix \ref{Proof of Proposition2}.
\end{proof}

\subsection{The A-ACF Case} \label{IV-B}
In this section, we evaluate the ranging performance of random ISAC signals with PA by investigating the sidelobe level of A-ACF.

Similarly to \eqref{rk with power allocation-IDFT}, the expression of A-ACF with PA is given by
    \begin{equation}\label{rk with power allocation-IDFT A-ACF}
    \begin{split}
        r_k   = \mathbf{x}^H \mathbf{J}_k \mathbf{x} 
              = \mathbf{s}^H \mathbf{B}^H \mathbf{F}_N \mathbf{J}_k \mathbf{F}_N^H\mathbf{B}\mathbf{s} ,
    \end{split}        
    \end{equation}
where $\mathbf{J}_k$ is a shift matrix and the size is $N \times N$. To apply Lemma 1, \eqref{rk with power allocation-IDFT A-ACF} can be rewritten as 
    \begin{equation} \label{rk with power allocation-IDFT A-ACF 2}
    \begin{split}
        {r}_k= \mathbf{x}^H \mathbf{J}_k \mathbf{x}  = 
        \begin{bmatrix}
            \mathbf{x}^H 
            \mathbf{0}_N^H
        \end{bmatrix} 
        \widetilde{\mathbf{J}}^{\star}_{k}
        \begin{bmatrix}
            \mathbf{x} \\
            \mathbf{0}_N
        \end{bmatrix}      ,  
    \end{split}
    \end{equation}
where ${\widetilde{\mathbf{J}}^{\star}_{k}}$ is a periodic shift matrix with size $2N \times 2N$ and $\mathbf{0}_{N}$ is a all-zero vector of length $N$. According to Lemma 1, it can be decomposed as
    \begin{equation} \label{J_k fenjie 2}
        \widetilde {\mathbf{J}}^{\star}_{k} = \sqrt{2N} \mathbf{F}_{2N}^H \mathrm{Diag}(\mathbf{f}^{\star}_{2N-k+1}) \mathbf{F}_{2N} ,
    \end{equation}
where $\mathbf{f}^{\star}_{k}$ is the $\mathnormal{k}$th column of $\mathbf{F}_{2N}$, which is the normalized DFT matrix of size $2N$. 

Accordingly, \eqref{rk with power allocation-IDFT A-ACF 2} can be rewritten as
    \begin{equation} \label{rk with power allocation-IDFT A-ACF 3}
    \begin{split}
        {r}_k =  
         \sqrt{2N} \mathbf{s}^H \mathbf{B}^H \mathbf{F}_N \widetilde{\mathbf{F}}_{2N}^H \mathrm{Diag}(\mathbf{f}^{\star}_{2N-k+1}) \widetilde{\mathbf{F}}_{2N} \mathbf{F}_N^H \mathbf{B} \mathbf{s} ,
    \end{split}
    \end{equation}
    where $\widetilde{\mathbf{F}}_{2N}\in \mathbb{C}^{2N \times N}$ contains the first $N$ columns of $\mathbf{F}_{2N}$. By denoting $\mathbf{V} = \mathbf{B}^H \mathbf{F}_N \widetilde{\mathbf{F}}_{2N}^H = [\mathbf{v}_1,\mathbf{v}_2,...,\mathbf{v}_{2N}]$, it follows that 
    \begin{align} \label{rk with power allocation-IDFT A-ACF 4}
        \ {r}_k  \nonumber& = \sqrt{2N} \mathbf{s}^H \mathbf{V} \mathrm{Diag}(\mathbf{f}^{\star}_{2N-k+1}) \mathbf{V}^H \mathbf{s}\\
        \nonumber& = \sum_{n=1}^{2N} \left| \mathbf{v}_n^H \mathbf{s} \right|^2 e^{-\frac{j 2\pi (2N-k)(n-1)}{2N}} \\
       & = \sum_{n=1}^{2N} \left| \mathbf{v}_n^H \mathbf{s} \right|^2 e^{\frac{j   2\pi k(n-1)}{2N}} .
    \end{align}
The average mainlobe level and the EISL of the A-ACF are $\mathbb{E}(|r_0|^2)$ and $\frac{1}{2} \sum_{k=1}^{2N-1}\mathbb{E}(|{r}_k|^2)$ respectively. 

\begin{proposition}
The average mainlobe and the EISL of the A-ACF with PA are respectively
\begin{align}
    \mathbb{E}(|r_0|^2) &= (\mu_4-1) \sum_{i=1}^{N} P_{i}^2 + N^2 \label{maiblobe of A-ACF power allocation} ,\\ \nonumber
    \frac{1}{2} \sum_{k=1}^{2N-1}\mathbb{E}(|{r}_k|^2) & = N(\mu_4 - 2) \|\mathbf{V}\|_{4}^{4} + 2N \sum_{n=1}^{2N} \|\mathbf{v}_n\|_{2}^{4} \label{EISL of A-ACF power allocation} \\ 
        & - \frac{1}{2} \left[ (\mu_4-1) \sum_{i=1}^{N} P_{i}^2 + N^2 \right] . 
\end{align}
\end{proposition}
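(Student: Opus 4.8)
The plan is to reduce both identities to moment computations over the random symbol vector $\mathbf{s}$, exploiting the discrete-Fourier structure already exposed in \eqref{rk with power allocation-IDFT A-ACF 4}; the argument runs parallel to the proof of Proposition 1.

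For the mainlobe, note that since $\mathbf{F}_N$ is unitary, $r_0 = \mathbf{x}^H\mathbf{x} = \mathbf{s}^H\mathbf{B}^H\mathbf{B}\mathbf{s} = \sum_{i=1}^N P_i|s_i|^2$, which is exactly the mainlobe random variable of the P-ACF appearing in \eqref{EAF with power allocation-IDFT main}. Expanding $|r_0|^2$ as a double sum over index pairs $(i,j)$ and invoking Assumption 1, the diagonal terms ($i=j$) contribute $\mu_4\sum_i P_i^2$ through $\mathbb{E}(|s_i|^4)=\mu_4$, while the off-diagonal terms contribute $\sum_{i\neq j}P_iP_j = N^2-\sum_i P_i^2$ through independence and $\mathbb{E}(|s_i|^2)=1$; summing gives \eqref{maiblobe of A-ACF power allocation}.

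For the EISL, the key observation is that \eqref{rk with power allocation-IDFT A-ACF 4} writes $r_k$ as the (conjugate) $2N$-point DFT of the nonnegative length-$2N$ sequence $a_n := |\mathbf{v}_n^H\mathbf{s}|^2$. Parseval's identity then gives $\sum_{k=0}^{2N-1}|r_k|^2 = 2N\sum_{n=1}^{2N}a_n^2 = 2N\sum_{n=1}^{2N}|\mathbf{v}_n^H\mathbf{s}|^4$, so that after removing the $k=0$ term and halving,
\[
\frac{1}{2}\sum_{k=1}^{2N-1}\mathbb{E}(|r_k|^2) = N\sum_{n=1}^{2N}\mathbb{E}\bigl(|\mathbf{v}_n^H\mathbf{s}|^4\bigr) - \frac{1}{2}\,\mathbb{E}(|r_0|^2).
\]
Thus it only remains to evaluate $\mathbb{E}(|\mathbf{v}_n^H\mathbf{s}|^4)$. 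Writing $\mathbf{v}_n^H\mathbf{s} = \sum_{m=1}^N v_{n,m}^*s_m$ and expanding $|\mathbf{v}_n^H\mathbf{s}|^4$ as a quadruple sum over $(m_1,m_2,m_3,m_4)$ (with $m_1,m_2$ unconjugated and $m_3,m_4$ conjugated), I would classify the terms by the coincidence pattern of the four indices: under Assumption 1, every pattern containing an isolated index vanishes by zero mean, the pattern $m_1=m_2\neq m_3=m_4$ vanishes by the zero pseudo-variance condition, and the only survivors are the all-equal pattern, contributing $\mu_4\|\mathbf{v}_n\|_4^4$, together with the two conjugate-matched patterns $m_1=m_3\neq m_2=m_4$ and $m_1=m_4\neq m_2=m_3$, each contributing $\sum_{m_1\neq m_2}|v_{n,m_1}|^2|v_{n,m_2}|^2 = \|\mathbf{v}_n\|_2^4-\|\mathbf{v}_n\|_4^4$. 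This yields $\mathbb{E}(|\mathbf{v}_n^H\mathbf{s}|^4) = (\mu_4-2)\|\mathbf{v}_n\|_4^4 + 2\|\mathbf{v}_n\|_2^4$; summing over $n$, using $\sum_{n=1}^{2N}\|\mathbf{v}_n\|_4^4 = \|\mathbf{V}\|_4^4$, and substituting the mainlobe value gives \eqref{EISL of A-ACF power allocation}.

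The step I expect to be the main obstacle is the fourth-moment bookkeeping: carefully enumerating the partitions of the index quadruple, verifying that the zero-mean and zero-pseudo-variance conditions annihilate all but three of them, and tracking the conjugation pattern so that the surviving terms assemble into the entrywise $\ell_4$-norm of $\mathbf{V}$ and the column $\ell_2$-norms $\|\mathbf{v}_n\|_2$. By contrast, the Parseval reduction and the mainlobe computation are routine and follow the P-ACF analysis almost verbatim.
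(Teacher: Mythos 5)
Your proof is correct, and it takes a noticeably different route from the paper's. The paper keeps the lag index $k$ explicit throughout: it takes the expectation first, writing $|\mathbf{v}_n^H\mathbf{s}|^2$ via the Kronecker/vectorization identity and invoking the fourth-moment matrix decomposition $\mathbf{S}=\mathbf{I}_{N^2}+\mathbf{S}_1+\mathbf{S}_2$ imported from \cite{liu2024ofdmachieveslowestranging}, which splits $\mathbb{E}(|r_k|^2)$ into the three structured terms $|\mathbf{v}_n^T\mathbf{v}_m^*|^2$, $(\mu_4-2)\|\mathbf{v}_n\odot\mathbf{v}_m\|_2^2$ and $\|\mathbf{v}_n\|_2^2\|\mathbf{v}_m\|_2^2$; it then sums each of the three terms over $k$ with separate Parseval/orthogonality computations. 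You instead swap the order of Parseval and expectation: applying Parseval to the realization $r_k=\sum_n |\mathbf{v}_n^H\mathbf{s}|^2 e^{j2\pi k(n-1)/(2N)}$ collapses the whole EISL into the single scalar moment $\mathbb{E}(|\mathbf{v}_n^H\mathbf{s}|^4)$, which you evaluate by enumerating index-coincidence patterns under Assumption 1, obtaining $(\mu_4-2)\|\mathbf{v}_n\|_4^4+2\|\mathbf{v}_n\|_2^4$ — exactly the quantity the paper assembles from its three terms, so the totals agree. Your mainlobe computation is also more direct than the paper's, since $r_0=\sum_i P_i|s_i|^2$ reduces it to the P-ACF case already handled in Appendix A. What you give up is per-lag information: the paper's expectation-first decomposition yields $\mathbb{E}(|r_k|^2)$ for every $k$, whereas your Parseval-first shortcut only produces the sum over $k$ — which is all the proposition requires. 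One minor remark: both your argument (the cross terms in $|r_0|^2$ and the pattern bookkeeping) and the paper's rely on the symbols being independent across $n$, which is implicit in the paper rather than stated in Assumption 1; this is not a gap relative to the paper, but worth stating if you write the proof up.
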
    

\begin{proof}
    See Appendix \ref{Proof of proposition 3}.
\end{proof}

\section{Optimal Power Allocation schemes}\label{Sec_IV}
In this section, we analyze the optimal PA based on OFDM modulation for both the P-ACF and A-ACF cases.

\subsection{The P-ACF Case}

\begin{thm} [Global Optimality of the Uniform PA for Ranging]
    When the OFDM signaling scheme is employed, the uniform PA is the only scheme that achieves the lowest normalized EISL of the P-ACF across all constellations.
\end{thm}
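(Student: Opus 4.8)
\emph{Proof proposal.} The plan is to collapse the PA vector into a single scalar and then solve a one-dimensional monotone optimization. By Corollary~1 the EISL of the P-ACF equals $[(N-1)\mu_4+1]\sum_{i=1}^N P_i^2 - N^2$, and by \eqref{EAF with power allocation-IDFT main} the average mainlobe equals $(\mu_4-1)\sum_{i=1}^N P_i^2 + N^2$. Hence the normalized EISL in \eqref{stochastic-optimization-problem-1} depends on $\{P_i\}$ only through $t := \sum_{i=1}^N P_i^2$, and equals
\[
    g(t) \;=\; \frac{[(N-1)\mu_4+1]\,t - N^2}{(\mu_4-1)\,t + N^2}.
\]
It therefore suffices to (i) determine the feasible range of $t$, and (ii) study the monotonicity of $g$.

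For (i), I would apply the Cauchy--Schwarz (equivalently power-mean) inequality: under $\sum_i P_i = N$ and $P_i \ge 0$ one has $t = \sum_{i=1}^N P_i^2 \ge \tfrac{1}{N}\bigl(\sum_{i=1}^N P_i\bigr)^2 = N$, with equality \emph{if and only if} $P_1 = \cdots = P_N = 1$, i.e., precisely the uniform PA. For (ii), write $a = (N-1)\mu_4 + 1$, $b = \mu_4 - 1$ and differentiate, obtaining $g'(t) = (a+b)N^2 / (bt+N^2)^2$. Since $\mu_4 \ge 1$ (from the power-mean inequality recalled in Section~\ref{sec2}), we have $a > 0$, $b \ge 0$, hence $a+b > 0$ and $bt + N^2 \ge N^2 > 0$ on the feasible set; thus $g'(t) > 0$ and $g$ is strictly increasing. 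Combining (i) and (ii), the normalized EISL attains its global minimum exactly when $t$ is minimized, i.e., if and only if the PA is uniform; this also covers the degenerate case $\mu_4 = 1$ (PSK), where $g$ is affine and still strictly increasing.

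I do not expect a substantive obstacle here; the only delicate points are confirming that the denominator of $g$ cannot vanish anywhere on the feasible set --- which $\mu_4 \ge 1$ and $t \ge 0$ guarantee --- and invoking the equality condition of Cauchy--Schwarz to pin down uniqueness of the minimizer. The essential content is the observation that the normalized EISL is a strictly increasing function of $\sum_i P_i^2$, after which the result follows from a classical inequality.
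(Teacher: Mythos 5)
Your proposal is correct and follows essentially the same route as the paper: both reduce the normalized EISL to a function of $t=\sum_{i=1}^N P_i^2$ alone, argue it is increasing in $t$ (the paper by rewriting the ratio as $\frac{N\mu_4}{(\mu_4-1)+N^2/t}-1$, you by computing $g'(t)>0$ explicitly), and then invoke the Cauchy--Schwarz equality condition to conclude that $t$ is minimized uniquely at the uniform PA. Your explicit derivative and the check that the denominator never vanishes merely make the paper's monotonicity claim more detailed; there is no substantive difference.
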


\begin{proof}
It is evident that both the average mainlobe and the EISL vary with the PA scheme. Therefore, we normalize the EISL with respect to the average mainlobe, yielding the expression  
    \begin{equation}\label{normalize EAF with power allocation-IDFT sibe}
    \begin{split}
        \frac{\sum_{k=1}^{N-1}\ \mathbb{E}(\left| \widetilde{r}_k \right|^2)}{\mathbb{E}(\left| \widetilde{r}_0 \right|^2)}
        &= \frac{N\mathbf{\mu}_4 \sum_{i=1}^N P_i^2}{(\mathbf{\mu}_4-1) \sum_{i=1}^N P_i^2 + N^2} - 1  \\
        &= \frac{N\mathbf{\mu}_4}{(\mathbf{\mu}_4-1) + \frac{N^2}{\sum_{i=1}^N P_i^2}} - 1 ,
    \end{split}
    \end{equation}
indicating that the normalized EISL is minimized if and only if $\sum_{i=1}^N P_i^2$ is minimized. Moreover, note that
    \begin{equation} \label{value of P}
        \begin{split}
            \sum_{i=1}^N P_i^2 = \frac{1}{N} \sum_{i=1}^N P_i^2 \sum_{i=1}^N 1^2 \geq \frac{1}{N} \left( \sum_{i=1}^N P_i\right)^2
        \end{split}
    \end{equation}
due to the Cauchy-Schwarz Inequality. The equality is attained if and only if $P_1 = P_2 = ... = P_N = 1$. This suggests that the normalized EISL is minimized under uniform PA, completing the proof.  
\end{proof}

\begin{thm} [The Uniform PA achieves the Lowest Sidelobe at Every Lag] When the OFDM signaling scheme is used, the uniform PA is the only scheme that achieves the lowest sidelobe level at every delay index $k$ of the P-ACF for all constellations.    
\end{thm}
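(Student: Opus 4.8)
The plan is to argue directly from the closed-form expression of Proposition~1,
\[
\mathbb{E}\bigl(|\widetilde{r}_k|^2\bigr) = (\mu_4-1)\sum_{i=1}^{N} P_i^2 + \Bigl| \sum_{n=1}^{N} P_n e^{j2\pi kn/N} \Bigr|^2 ,
\]
by lower-bounding its two summands separately for each lag $k \in \{1,\dots,N-1\}$. Since the paper has already shown $\mu_4 \ge 1$ through the Power Mean Inequality, the first summand obeys $(\mu_4-1)\sum_i P_i^2 \ge (\mu_4-1)N$ by the Cauchy--Schwarz inequality (precisely the bound used in the proof of Theorem~1), while the second summand is manifestly nonnegative. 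Hence $\mathbb{E}(|\widetilde{r}_k|^2) \ge (\mu_4-1)N$ for every feasible PA and every lag, so any scheme attaining the value $(\mu_4-1)N$ at a lag is a minimizer there.

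Second, I would verify that uniform PA attains this common lower bound simultaneously at all lags: it gives $\sum_i P_i^2 = N$, and for $k\in\{1,\dots,N-1\}$ the geometric sum of $N$-th roots of unity vanishes, $\sum_{n=1}^{N} e^{j2\pi kn/N} = 0$, killing the second summand. Thus uniform PA is a per-lag minimizer, which settles the existence part of the theorem.

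The substance of the proof is uniqueness, which I would handle in two cases. If $\mu_4 > 1$, then equality in the lower bound at even a single lag requires the Cauchy--Schwarz equality $\sum_i P_i^2 = N$ under the constraint $\sum_i P_i = N$, which forces $P_1=\dots=P_N=1$; uniqueness is then immediate. If $\mu_4 = 1$ (the PSK family), the first summand vanishes identically and attaining the bound at lag $k$ only demands $\sum_{n=1}^{N} P_n e^{j2\pi kn/N} = 0$, a condition that, for a single $k$, admits many non-uniform solutions. This is exactly where the hypothesis ``at every lag'' is indispensable: imposing $\sum_{n=1}^{N} P_n e^{j2\pi kn/N} = 0$ for all $k=1,\dots,N-1$ together with the power constraint $\sum_{n=1}^{N} P_n = N$ says that the length-$N$ DFT of the power vector $(P_1,\dots,P_N)$ is supported only at the DC bin; since the DFT is invertible and the all-ones vector is the unique preimage of that DC-only spectrum, we conclude $P_1=\dots=P_N=1$.

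I expect the only delicate point to be the $\mu_4=1$ branch: one must resist concluding uniqueness from a single lag and instead exploit the joint constraints over all $N-1$ nontrivial lags, rephrased as a statement about the support of the DFT of the power vector. Everything else---the lower bound, the achievability by uniform PA, and the $\mu_4>1$ branch---follows routinely from Proposition~1 and the inequalities already invoked in the proof of Theorem~1.
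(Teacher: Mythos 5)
Your proof is correct, and it departs from the paper's in two respects worth noting. The paper works with the mainlobe-normalized quantity $\mathbb{E}(|\widetilde{r}_k|^2)/\mathbb{E}(|\widetilde{r}_0|^2)$, rewrites it as $1 - \bigl(N^2 - |\sum_{n} P_n e^{j2\pi kn/N}|^2\bigr)/\bigl((\mu_4-1)\sum_{i} P_i^2 + N^2\bigr)$, and observes that uniform PA simultaneously maximizes the numerator (zeroing the root-of-unity sum) and minimizes the denominator (the Cauchy--Schwarz step from Theorem 1); you instead lower-bound the unnormalized $\mathbb{E}(|\widetilde{r}_k|^2)$ by $(\mu_4-1)N$ and show uniform PA attains this bound at every lag. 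If the intended metric is the normalized one, your argument transfers with one extra line: writing $A=(\mu_4-1)\sum_i P_i^2$ and $B=|\sum_n P_n e^{j2\pi kn/N}|^2$, the ratio $(A+B)/(A+N^2)$ is increasing in $A$ and in $B$, so jointly minimizing both (which uniform PA does) minimizes the ratio. The more substantive difference is uniqueness: the paper simply asserts that only uniform PA attains the optimum, but for $\mu_4=1$ (PSK) the term $A$ is zero for every feasible PA and the single-lag condition $\sum_n P_n e^{j2\pi kn/N}=0$ has many non-uniform solutions (e.g., for even $N$ and $k=N/2$, any split with equal total power on odd and even subcarriers), so per-lag uniqueness genuinely fails and the ``at every lag'' hypothesis is indispensable. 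Your DFT-support argument --- all non-DC bins of the power vector vanish, the DC bin is pinned by $\sum_n P_n = N$, and invertibility of the DFT forces $P_n \equiv 1$ --- closes exactly this gap, making your uniqueness treatment tighter than the paper's own; what the paper's route buys in exchange is brevity and direct reuse of the Theorem 1 machinery on the normalized objective.
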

\begin{proof}
    According to \eqref{EAF with power allocation-IDFT 1} and \eqref{EAF with power allocation-IDFT main}, the normalized $\mathbb{E}(|\widetilde{r}_k|^2)$ is shown as follows:
    \begin{equation} \label{normalized EAF with power allocation-IDFT 1}
    \begin{aligned}
        \frac{\mathbb{E}(|\widetilde{r}_k|^2)}{\mathbb{E}(\left| \widetilde{r}_0 \right|^2)} 
        & = 1 - \frac{N^2 - \left| \sum_{n=1}^N P_n e^\frac{j2\pi kn}{N} \right|^2}{(\mu_4-1) \sum_{i=1}^N P_i^2 + N^2} , k = 1,2,...,N-1.
    \end{aligned}
    \end{equation}
When $P_1 = P_2 = ... = P_N = 1$, 
\begin{equation}
    \left| \sum_{n=1}^N P_n e^\frac{j2\pi kn}{N} \right|^2 = 0,
\end{equation}
which maximizes the numerator of the second term of the right-hand side (RHS) in \eqref{normalized EAF with power allocation-IDFT 1}. Moreover, by applying an argument similar to the proof of Theorem 1, the denominator is minimized, which, in turn, maximizes the second term of the RHS of \eqref{normalized EAF with power allocation-IDFT 1}. Therefore, only when $P_1 = P_2 = ... = P_N = 1$, the normalized $\mathbb{E}(|\widetilde{r}_k|^2)$ reaches to its minimum, completing the proof.
\end{proof}

When the uniform PA scheme is employed, the average mainlobe level is
    \begin{equation}\label{EAF with uniform power allocation-IDFT main }
    \begin{split}
        \ \mathbb{E}(|\widetilde{r}_0|^2) = ({\mu_4}-1)N + N^2 ,
    \end{split}        
    \end{equation}
the average sidelobe level is
    \begin{equation}\label{EAF with uniform power allocation-IDFT}
    \begin{split}
        \mathbb{E}(|\widetilde{r}_k|^2) = ({\mu_4}-1)N,\quad k \ne 0 ,
    \end{split}        
    \end{equation}
and the normalized EISL is expressed as 
    \begin{equation}\label{normalize EAF with uniform power allocation-IDFT sibe}
    \begin{split}
        \frac{\sum_{k=1}^{N-1}\ \mathbb{E}(|\widetilde{r}_k|^2)}{\mathbb{E}(|\widetilde{r}_0|^2)}
        = \frac{N{\mu_4}}{({\mu_4}-1) + N} - 1 .
    \end{split}
    \end{equation}

An interesting observation is that for PSK constellations ($\mu_4 = 1$), both the individual and the integrated sidelobe levels are zero under OFDM signaling with uniform power. This implies that the P-ACF of an OFDM-PSK signal with uniform power is always a unit impulse function. This observation is consistent with the results introduced in \cite{liu2024ofdmachieveslowestranging}.

\subsection{The A-ACF Case}
\begin{thm} [Uniform PA is Globally Optimal for All Constellations] When the OFDM signaling scheme is used, the uniform PA is the only PA scheme that achieves the lowest normalized EISL of the A-ACF for all constellations.
\end{thm}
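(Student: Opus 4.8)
The plan is to turn Proposition 3 into an explicit function of the power vector $\mathbf{P}=[P_1,\dots,P_N]^{T}$ and then reproduce, almost verbatim, the one-dimensional monotonicity argument used for Theorem 1. Writing $\rho:=\sum_{i=1}^{N}P_i^{2}$ and normalizing the A-ACF EISL by the average mainlobe, the object to minimize is
\[
\eta(\mathbf{P})=\frac{N(\mu_4-2)\|\mathbf{V}\|_4^{4}+2N\sum_{n=1}^{2N}\|\mathbf{v}_n\|_2^{4}-\tfrac12\big[(\mu_4-1)\rho+N^{2}\big]}{(\mu_4-1)\rho+N^{2}},
\]
so everything hinges on evaluating $\|\mathbf{V}\|_4^{4}$ and $\sum_n\|\mathbf{v}_n\|_2^{4}$ in closed form, where $\mathbf{V}=\mathbf{B}\mathbf{F}_N\widetilde{\mathbf{F}}_{2N}^{H}$.

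The key step is the column structure of $\mathbf{V}$. Since $\widetilde{\mathbf{F}}_{2N}$ collects the first $N$ columns of $\mathbf{F}_{2N}$, the map $\mathbf{F}_N\widetilde{\mathbf{F}}_{2N}^{H}$ sends an $N$-point spectrum to the $2N$-point spectrum of the zero-padded time-domain signal; hence the columns of $\mathbf{V}$ indexed by an even DFT frequency (the odd-numbered columns $\mathbf{v}_1,\mathbf{v}_3,\dots$) collapse to a single entry, $\mathbf{v}_{2k+1}=\tfrac{1}{\sqrt2}\sqrt{P_{k+1}}\,\mathbf{e}_{k+1}$, while the columns indexed by an odd frequency (the even-numbered columns) are Dirichlet-type vectors whose $p$-th entry has squared modulus $P_p\,w_{(k-p)\bmod N}$, with $w_d:=\big(2N^{2}\sin^{2}\tfrac{(2d+1)\pi}{2N}\big)^{-1}$. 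Summing $|V_{pn}|^{4}$ and $\big(\sum_p|V_{pn}|^{2}\big)^{2}$ over all $2N$ columns, and recognizing the even-column contribution as the circular convolution $(\mathbf{P}\ast\mathbf{w})_k=\sum_p P_p\,w_{(k-p)\bmod N}$, gives
\[
\|\mathbf{V}\|_4^{4}=\Big(\tfrac14+\textstyle\sum_{d}w_d^{2}\Big)\rho,\qquad
\sum_{n=1}^{2N}\|\mathbf{v}_n\|_2^{4}=\tfrac14\rho+\|\mathbf{P}\ast\mathbf{w}\|_2^{2}.
\]
I expect the genuine obstacle to be the evaluation of $\sum_d w_d$ and $\sum_d w_d^{2}$: the first follows from the classical identity $\sum_{d=0}^{N-1}\csc^{2}\tfrac{(2d+1)\pi}{2N}=N^{2}$ and gives $\sum_d w_d=\tfrac12$, while the second needs the less familiar $\sum_{d=0}^{N-1}\csc^{4}\tfrac{(2d+1)\pi}{2N}=\tfrac{N^{2}(N^{2}+2)}{3}$, which I would establish by summing over the $2N$-th roots of unity that are not $N$-th roots (a partial-fraction/residue computation), giving $\sum_d w_d^{2}=\tfrac{N^{2}+2}{12N^{2}}$.

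With these in hand the numerator collapses. The elementary bound $\|\mathbf{P}\ast\mathbf{w}\|_2^{2}\ge\tfrac1N\big(\sum_k(\mathbf{P}\ast\mathbf{w})_k\big)^{2}=N\big(\sum_d w_d\big)^{2}$ (Cauchy--Schwarz, since the circular sum $\sum_k(\mathbf{P}\ast\mathbf{w})_k=N\sum_d w_d$ is fixed) lets me write $\|\mathbf{P}\ast\mathbf{w}\|_2^{2}=N(\sum_d w_d)^{2}+E$ with $E\ge0$ and $E=0$ at $\mathbf{P}=\mathbf{1}$. Because $\sum_d w_d=\tfrac12$, the constant piece $2N\cdot N(\sum_d w_d)^{2}-\tfrac12N^{2}=\tfrac12N^{2}-\tfrac12N^{2}$ vanishes, and the numerator reduces to $A\rho+2NE$ with $A=\tfrac{N-1}{6N}\big[(2N-1)\mu_4-(N-2)\big]$; since $\mu_4\ge1$ forces $(2N-1)\mu_4-(N-2)\ge N+1>0$, we get $A>0$ for every $N\ge2$ (for $N=1$ the A-ACF has no sidelobe and the claim is vacuous).

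The endgame then mirrors Theorem 1. Using $E\ge0$,
\[
\eta(\mathbf{P})=\frac{A\rho+2NE}{(\mu_4-1)\rho+N^{2}}\ \ge\ g(\rho):=\frac{A\rho}{(\mu_4-1)\rho+N^{2}},
\]
and $g$ is strictly increasing because $g'(\rho)\propto AN^{2}>0$; by Cauchy--Schwarz $\rho=\sum_iP_i^{2}\ge\tfrac1N\big(\sum_iP_i\big)^{2}=N$, with equality iff $\mathbf{P}=\mathbf{1}$. Chaining, $\eta(\mathbf{P})\ge g(\rho)\ge g(N)=\eta(\mathbf{1})$, and equality everywhere forces $g(\rho)=g(N)$, hence $\rho=N$, hence $\mathbf{P}=\mathbf{1}$, which proves that uniform PA is the unique minimizer of the normalized A-ACF EISL across all constellations. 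The only nontrivial parts are the $\ell_4$-type evaluations through the two cosecant-power identities and spotting that the value $\sum_d w_d=\tfrac12$ makes the constant term cancel; once that happens, the optimization is a carbon copy of the P-ACF case.
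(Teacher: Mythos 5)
Your proposal is correct, but it takes a genuinely different route from the paper on the hard part of the argument. Both you and the paper start from the closed form of Proposition 2, normalize by the average mainlobe, and exploit the same structural split of $\mathbf{W}=\mathbf{F}_N\widetilde{\mathbf{F}}_{2N}^{H}$ into odd columns (a scaled identity, so that part contributes $\tfrac{\mu_4}{4}\sum_i P_i^2$ and is handled by $\sum_i P_i^2\ge N$, exactly as in Theorem~1) and even columns (a scaled circulant $\tfrac{1}{\sqrt 2}\mathbf{F}_N\mathbf{D}_{1/2}^{H}\mathbf{F}_N^{H}$). The divergence is in how the even-column contribution is treated: the paper bounds it by a Cauchy--Schwarz inequality over the column index and argues optimality through the equality condition ``all terms equal,'' which is attained at uniform PA but is not developed into an explicit, self-contained global argument; you instead compute that contribution in closed form, using $|(\mathbf{F}_N\mathbf{D}_{1/2}^{H}\mathbf{F}_N^{H})_{p,k}|^2=\tfrac{1}{N^2}\csc^2\tfrac{(2(k-p)+1)\pi}{2N}$ (I checked this, and your circulant-weight identification $P_p\,w_{(k-p)\bmod N}$ is right), together with the cosecant sums $\sum_d\csc^2\tfrac{(2d+1)\pi}{2N}=N^2$ and $\sum_d\csc^4\tfrac{(2d+1)\pi}{2N}=\tfrac{N^2(N^2+2)}{3}$ (both classical and correct, though the second deserves the short residue/partial-fraction proof you only sketch). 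This collapses the numerator exactly to $A\rho+2NE$ with $\rho=\sum_iP_i^2$, $E=\|\mathbf{P}\ast\mathbf{w}\|_2^2-\tfrac N4\ge 0$, and $A=\tfrac{N-1}{6N}\bigl[(2N-1)\mu_4-(N-2)\bigr]>0$ for $\mu_4\ge 1$, $N\ge 2$ (your algebra and the constant-term cancellation check out), after which monotonicity of $\rho\mapsto\tfrac{A\rho}{(\mu_4-1)\rho+N^2}$ and $\rho\ge N$ with equality iff $\mathbf{P}=\mathbf{1}$ give both global optimality and uniqueness in one stroke. What your route buys is an exact, fully explicit reduction and a cleaner uniqueness argument (no equality-case discussion for the even-column bound is needed, since uniqueness already follows from $\rho=N$); what it costs is the extra trigonometric machinery, whereas the paper's argument stays at the level of norm inequalities on the columns of $\mathbf{V}$.
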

\begin{proof}
    See Appendix \ref{Proof of theorem 3}
\end{proof}

At this time, the average mainlobe of A-ACF is
    \begin{equation} \label{maiblobe of A-ACF avarege power}
        \mathbb{E}(|r_0|^2) = (\mu_4-1) N + N^2,
    \end{equation}
and the normalized EISL of A-ACF is
    \begin{equation} \label{normalized EISL of A-ACF with uniform power}
    \begin{split}
        \frac{\frac{1}{2} \sum_{k=1}^{2N-1} \mathbb{E}(|r_k|^2)}{\mathbb{E}(|r_0|^2)} = \frac{N(\mu_4 - 2) \left| \left| \mathbf{F}_N \widetilde{\mathbf{F}}_{2N}^{H} \right| \right|_{4}^{4} + N^2}{(\mu_4-1) N + N^2} - \frac{1}{2}.
    \end{split}    
    \end{equation}

\section{CP-OFDM With power allocation and frequency zero-padding} \label{Sec V}
In this section, we analyze the P-ACF for the OFDM scheme incorporating PA and frequency zero-padding. Zero-padding refers to the process of appending zeros to the end of a signal or data sequence to extend its length. Specifically, zero-padding in the frequency domain involves adding zero values to the high-frequency portion of the signal's spectrum, thereby increasing the overall length of the spectrum. By doing this, the resolution of the time-domain signal is enhanced through increasing the number of sampling points. Consequently, the accuracy of the range estimation may be improved.

 According to \eqref{rk with power allocation-IDFT 2}, when the OFDM modulation and PA are applied, the P-ACF is the inverse discrete Fourier
transform (IDFT) of the sequence $|c_n|^2$ up to a factor $\frac{1}{N}$. Through frequency zero-padding with $N(L-1)$ null subcarriers, where $L\ge1$ represents the padding factor, we generate an augmented sequence $\hat{\mathbf{c}}$ with an extended length of $NL$ samples, yielding
    \begin{equation}
    \begin{split}
        \hat{\mathbf{c}} & = \left[|c_1|^2,|c_2|^2,...,|c_N|^2,\mathbf{0}_{N(L-1)}{^T} \right]^T \\
        & = \left[|s_1\sqrt{P_1}|^2,|s_2\sqrt{P_2}|^2,...,|s_N\sqrt{P_N}|^2,\mathbf{0}_{N(L-1)}{^T} \right]^T  ,    
    \end{split}
    \end{equation}
where $\mathbf{0}_{N(L-1)}$ represents the all-zero vector with length $N(L-1)$.

Analogous to the derivation in \eqref{rk with power allocation-IDFT 2}, we construct an enhanced P-ACF incorporating both PA and frequency-domain zero-padding. The modified P-ACF, denoted as $\hat{r}_k$, is expressed as
    \begin{equation}
        \begin{split}\label{rk with power allocation and zero padding}
            \hat{r}_k &= \sum_{n=1}^N \left| c_n \right|^2 e^\frac{j2\pi k(n-1)}{NL} 
            = \sum_{n=1}^N {P_n} \left| s_n \right|^2 e^\frac{j2\pi k(n-1)}{NL} ,\\
        \end{split}
    \end{equation}
which is the IDFT of the sequence $\hat{\mathbf{c}}$ up to a factor $\frac{1}{NL}$.

\begin{proposition}
    The expectation of the integrated sidelobe level EISL$_{ZP}$ and the average mainlobe level of the P-ACF with PA and frequency zero-padding can be expressed as 
    \begin{align}\nonumber
        \operatorname{EISL}_{\text{ZP}} &= \sum_{k=L}^{\frac{NL}{2}-1} \mathbb{E} (|\hat{r}_k|^2)  \\
        & = (\frac{NL}{2}-L) \left( \mu_4-1 \right) \left\| \mathbf{p} \right\|_2^2 + \left\| \mathbf{G}^H \mathbf{p} \right\|_2^2 \label{hat EISL}, \\ 
        \mathbb{E} (|\hat{r}_0|^2) &= (\mu_4-1) \sum_{i=1}^N P_i^2 + N^2,
    \end{align}
where
    \begin{align} 
        \mathbf{g}_k &= 
        \left[
            e^\frac{-j2\pi k}{NL},
            e^\frac{-j2\pi 2k}{NL},
            . 
            . 
            . 
            ,e^\frac{-j2\pi Nk}{NL} \right]^T ,\\
        \mathbf{G} &= 
        \left[
            \mathbf{g}_L, \mathbf{g}_{L+1}, ... ,\mathbf{g}_{NL/2-1}
        \right],\\
        \mathbf{p} &= \left[P_1,P_2,...,P_N\right]^T.
    \end{align}
\end{proposition}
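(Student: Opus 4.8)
The plan is to obtain both formulas by a direct second-moment calculation starting from the closed form \eqref{rk with power allocation and zero padding}, using only the i.i.d.\ structure of the data symbols together with $\mathbb{E}(|s_n|^2)=1$ and $\mathbb{E}(|s_n|^4)=\mu_4$ from Assumption~1. Introduce $a_n:=|s_n|^2$, so the $a_n$ are i.i.d.\ nonnegative with $\mathbb{E}(a_n)=1$ and $\mathbb{E}(a_n^2)=\mu_4$, and write $\hat r_k=\sum_{n=1}^N P_n a_n e^{j2\pi k(n-1)/(NL)}$. Two elementary facts will be used repeatedly: the $(n-1)$ offset contributes only the global factor $e^{-j2\pi k/(NL)}$ of unit modulus, which never affects any $|\hat r_k|^2$; and $\hat r_k$ is $NL$-periodic in $k$ with $\hat r_{-k}=\hat r_k^{*}$ (hence $|\hat r_{NL-k}|^2=|\hat r_k|^2$), because each $P_n a_n$ is real.

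First I would compute the mainlobe. Putting $k=0$ gives $\hat r_0=\sum_{n=1}^N P_n a_n$, so $\mathbb{E}(|\hat r_0|^2)=\sum_{n,m}P_nP_m\,\mathbb{E}(a_na_m)$. Splitting this double sum into its diagonal part ($n=m$, where $\mathbb{E}(a_n^2)=\mu_4$) and its off-diagonal part ($n\neq m$, where independence gives $\mathbb{E}(a_n)\mathbb{E}(a_m)=1$), and using the power constraint $\sum_n P_n=N$, I get $\mathbb{E}(|\hat r_0|^2)=\mu_4\sum_i P_i^2+\big(\sum_iP_i\big)^2-\sum_iP_i^2=(\mu_4-1)\sum_{i=1}^N P_i^2+N^2$, which is the stated mainlobe expression.

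Next I would treat a general lag $k$. From $|\hat r_k|^2=\sum_{n,m}P_nP_m a_na_m e^{j2\pi k(n-m)/(NL)}$ and the same diagonal/off-diagonal decomposition, $\mathbb{E}(|\hat r_k|^2)=\mu_4\sum_n P_n^2+\sum_{n\neq m}P_nP_m e^{j2\pi k(n-m)/(NL)}$. Re-completing the off-diagonal sum to the full double sum and subtracting the diagonal back turns the second term into $\big|\sum_n P_n e^{j2\pi kn/(NL)}\big|^2-\sum_n P_n^2=|\mathbf{g}_k^H\mathbf{p}|^2-\|\mathbf{p}\|_2^2$, so $\mathbb{E}(|\hat r_k|^2)=(\mu_4-1)\|\mathbf{p}\|_2^2+|\mathbf{g}_k^H\mathbf{p}|^2$. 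Finally, zero-padding by the factor $L$ dilates the delay axis so that the mainlobe of $\hat r_k$ occupies the lags $|k|\le L-1$ (for uniform weights these are exactly the null-to-null lags of the associated Dirichlet kernel), and by the symmetry $|\hat r_{NL-k}|^2=|\hat r_k|^2$ it suffices to account for $k=L,\dots,NL/2-1$ — precisely the stated index set. This range has $NL/2-L$ elements, each contributing the constant $(\mu_4-1)\|\mathbf{p}\|_2^2$, while $\sum_{k=L}^{NL/2-1}|\mathbf{g}_k^H\mathbf{p}|^2=\|\mathbf{G}^H\mathbf{p}\|_2^2$ by the definition $\mathbf{G}=[\mathbf{g}_L,\dots,\mathbf{g}_{NL/2-1}]$ and of the Euclidean norm; adding the two pieces yields \eqref{hat EISL}.

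I do not expect a genuine obstacle here: the argument is essentially moment bookkeeping. The only two points that need care are (i) keeping the unit-modulus phase arising from the $(n-1)$ index offset from leaking into the magnitudes, and (ii) pinning down which lags constitute the mainlobe, i.e.\ that the post-zero-padding mainlobe width is $2L-1$, and then invoking the periodicity/conjugate symmetry of $\hat r_k$ so that the sum over $k=L,\dots,NL/2-1$ indeed captures (half of) the sidelobe energy. If one prefers to sidestep the ``mainlobe width $=2L-1$'' convention, the per-lag identities $\mathbb{E}(|\hat r_0|^2)=(\mu_4-1)\sum_i P_i^2+N^2$ and $\mathbb{E}(|\hat r_k|^2)=(\mu_4-1)\|\mathbf{p}\|_2^2+|\mathbf{g}_k^H\mathbf{p}|^2$ are valid for every $k$, and the proposition then reduces to inserting the declared summation range.
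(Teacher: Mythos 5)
Your proposal is correct and follows essentially the same route as the paper's own proof: a direct second-moment computation of $|\hat r_k|^2$ from \eqref{rk with power allocation and zero padding}, splitting the double sum into diagonal ($\mu_4$) and off-diagonal terms, recompleting the square to obtain $\mathbb{E}(|\hat r_k|^2)=(\mu_4-1)\|\mathbf{p}\|_2^2+|\mathbf{g}_k^H\mathbf{p}|^2$, and then summing over $k=L,\dots,NL/2-1$ with the $\mathbf{g}_k$, $\mathbf{G}$ notation. Your added remarks on the harmless $(n-1)$ phase offset and on why that summation range captures the sidelobes (mainlobe width and conjugate symmetry) are sound clarifications of points the paper leaves implicit, but they do not change the argument.
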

    
\begin{proof}
    See Appendix \ref{proof of proposition 4}.
\end{proof}

 The normalized $\text{EISL}_{\text{ZP}}$ can be denoted as 
    \begin{equation} \label{normalized hat EISL}
            f(\mathbf{p}) =\frac{\left( \frac{NL}{2}-L \right) \left( \mu_4-1 \right) \left\| \mathbf{p} \right\|_2^2 + \left\| \mathbf{G}^H \mathbf{p} \right\|_2^2}{(\mu_4-1) \left\| \mathbf{p} \right\|_2^2 + N^2} .
    \end{equation}


Through comprehensive theoretical analysis and numerical simulations, we found that the uniform PA scheme ceases to be optimal for minimizing the normalized $\text{EISL}_{\text{ZP}}$ when frequency zero-padding is implemented. To address this limitation, we formulate the following constrained optimization problem to derive a suitable PA strategy that can minimize the normalized $\text{EISL}_{\text{ZP}}$.

   \begin{equation} \label{objective function power allocation zero-padding}
   \begin{aligned}
       &\min  \quad f(\mathbf{p}) \\
       &\mathrm{s.t.} \quad  \sum\nolimits_{i=1}^{N} P_i = N  \\  
       & \qquad \quad P_i \ge 0 , i = 1,2,...,N. \\
   \end{aligned}
   \end{equation}


Notably, the optimization problem in \eqref{objective function power allocation zero-padding} requires minimizing a fractional objective function over the simplex. To solve this constrained problem, we adopt the PGD algorithm to obtain a locally optimal solution. As outlined in Algorithm 1, the PGD method enhances the conventional Gradient Descent (GD) by introducing a projection step, which ensures that each iteration remains within the feasible region \cite{boyd2004convex}, \cite{Lushihang2024TSP}. The gradient of the objective function  $f(\mathbf{p})$ at an arbitrary point $\mathbf{p}$ is given by
    \begin{equation} \label{the gradient of p power allocation zero-padding}
        \nabla f(\mathbf{p}) = \frac{ \frac{\partial \text{EISL}_\text{ZP}}{\partial \mathbf{p}} \mathbb{E}(|\hat{r}_0|^2) - \text{EISL}_\text{ZP} \frac{\partial \mathbb{E}(|\hat{r}_0|^2)}{\partial \mathbf{p}}} {(\mathbb{E}(|\hat{r}_0|^2))^2},
    \end{equation}
where
    \begin{align}\nonumber
        \frac{\partial \text{EISL}_\text{ZP}}{\partial \mathbf{p}} &= \left( NL - 2L \right) \left( \mu_4 -1 \right) \mathbf{p} \\
        &+ \sum_{k=L}^{\frac{NL}{2} - 1} \left( \mathbf{g}_k^* \mathbf{p}^H \mathbf{g}_k + \mathbf{g}_k \mathbf{p}^T \mathbf{g}_k^*\right) ,\\
        \frac{\partial \mathbb{E}(|\hat{r}_0|^2)}{\partial \mathbf{p}} &= 2(\mu_4 - 1) \mathbf{p}.
    \end{align}

\begin{center}
\begin{algorithm}[!t]
    \caption{PGD Algorithm for Solving \eqref{objective function power allocation zero-padding}}
    \label{alg:PGD Project 2}
    \begin{algorithmic}[1]
    \REQUIRE $N$, $\mathbf{p}_0$, $r_{max}$, $\varepsilon$
    \ENSURE $\mathbf{p}$
    \STATE Initialize $r = 1$ and $\mathbf{p}^{(1)} = \mathbf{p}_0$.
    \STATE Initialize the Communication Symbol Set $\mathbf{S} = \left\{ \mathbf{s}_1,\mathbf{s}_2,...,\mathbf{s}_N \right\}$ and calculate $\mu_4$.
    \STATE $\textbf{repeat}$
    \STATE \quad Calculate ${\nabla}f(\mathbf{p}^{(r)})$.
    \STATE \quad Update $\mathbf{p}^{(r+1)}$ $\xleftarrow{}$ 
    $ \mathrm{Proj}_\mathcal{B}(\mathbf{p}^{(r)} - \eta^{(r)}{\nabla}f(\mathbf{p}^{(r)}) )$.
    \STATE \quad Update $r = r+1$.
    \STATE $\textbf{until}$ the decrease of the objective value is below $\varepsilon$ or $r = r_{max}$.    
\end{algorithmic}
\end{algorithm}
\end{center}
Moreover, the projector over the simplex can be expressed as
    \begin{equation} \label{the projection operation p}
        \quad \mathrm{Proj}_\mathcal{B} (\mathbf{p}^{(r+1)}) =  \frac{N}{\mathbf{1}_N^T \mathbf{p}^{\left(r+1\right)} }  \mathbf{p}^{\left(r+1\right)},
    \end{equation}
where $\mathbf{1}_N$ represents the all-one vector with length $N$ and $\mathcal{B}$ denotes the feasible region of the optimization variable $\mathbf{p}$.

The average sidelobe levels achieved by the proposed PA scheme, obtained via Algorithm \ref{alg:PGD Project 2}, are illustrated in Fig. \ref{fig:P-ACF 16-QAM power different N} and Fig. \ref{fig:P-ACF 16-PSK power different N}. In these figures, ``optimal PA'' denotes the PA scheme derived from Algorithm \ref{alg:PGD Project 2}. The results indicate that the optimal PA offers considerable sidelobe suppression compared to the ``uniform PA''. However, this improvement is accompanied by a slight broadening of the mainlobe, which in turn leads to a degradation in range resolution performance.


To mitigate the adverse effects of PA on mainlobe width, we propose to add a new constraint in the original optimization problem \eqref{objective function power allocation zero-padding}. The enhanced formulation is given by:

   \begin{equation} \label{objective function power allocation zero-padding with width}
   \begin{aligned}
       &\min  \quad f(\mathbf{p}) \\
       &\mathrm{s.t.} \quad  \sum\nolimits_{i=1}^{N} P_i = N  \\  
       & \qquad \quad P_i \ge 0 , i = 1,2,...,N \\
       & \qquad \quad \mathbb{E} (|\hat{r}_q|^2)  \le \frac{1}{2} \mathbb{E} (|\hat{r}_0|^2).
   \end{aligned}
   \end{equation}
The newly added constraint $\mathbb{E} (|\hat{r}_q|^2) \le \frac{1}{2} \mathbb{E} (|\hat{r}_0|^2)$ ensures that the average sidelobe level at delay 
$q$ remains within 3 dB of the average mainlobe level $\mathbb{E} (|\hat{r}_0|^2)$. This condition is equivalent to $ 10 \text{log}(\frac{\mathbb{E} (|\hat{r}_q|^2)}{\mathbb{E} (|\hat{r}_0|^2)}) \le -3 \text{dB}$.


\begin{center}
\begin{algorithm}[!t]
    \caption{SCA Algorithm for Solving \eqref{objective function power allocation zero-padding with width}}
    \label{alg:SCA algorithm}
    \begin{algorithmic}[1]
    \REQUIRE $N$, $L$, $r_{max}$, $\varepsilon$
    \ENSURE $\mathbf{p}$
    \STATE Find a feasible starting point $\mathbf{p}^{(0)}$.
    \STATE Initialize the Communication Symbol Set $\mathbf{S} = \left\{ \mathbf{s}_1,\mathbf{s}_2,...,\mathbf{s}_N \right\}$ and calculate $\mu_4$.
    \STATE Initial $r = 0$.
    \STATE $\textbf{repeat}$
    \STATE \quad Calculate ${\nabla}f(\mathbf{p}^{(r)})$ and construct $f_{su}(\mathbf{p},\mathbf{p}^{(r)})$.
    \STATE \quad Attaining $\hat{\mathbf{p}}^{(r)}$ by solving the convex problem \eqref{SCA surrogate function}. 
    \STATE \quad Update $\mathbf{p}^{(r+1)} = \mathbf{p}^{(r)} + \alpha^{(r)} (\hat{\mathbf{p}}^{(r)} - {\mathbf{p}}^{(r)})$.
    \STATE \quad Update $r = r+1$.
    \STATE $\textbf{until}$ the decrease of the objective value is below $\varepsilon$ or $r = r_{max}$.    
\end{algorithmic}
\end{algorithm}
\end{center}

Compared to the original formulation in \eqref{objective function power allocation zero-padding}, the introduction of the third constraint in \eqref{objective function power allocation zero-padding with width} makes it challenging to project 
$\mathbf{p}$ onto the feasible region. As a result, the PGD algorithm is no longer directly applicable for solving the modified optimization problem. Upon analyzing the problem in \eqref{objective function power allocation zero-padding with width}, we observe that although all three constraints are convex, the objective function remains non-convex. Given these conditions, the Successive Convex Approximation (SCA) algorithm becomes a suitable alternative. SCA is a powerful framework for handling non-convex optimization problems, as it addresses non-convex optimization problems through iterative minimization of convex surrogate functions \cite{boyd2004convex, LiuAn2019TSP, KM2017}. The core procedure of the SCA algorithm lies in constructing an appropriate surrogate function $f_{su}(\mathbf{p},\mathbf{p}^{(r)})$ and determining a suitable step size 
$\alpha^{(r)}$. In Algorithm \ref{alg:SCA algorithm}, the surrogate function is defined as 
    \begin{equation} \label{SCA surrogate function}
    f_{su}(\mathbf{p}, \mathbf{p}^{(r)}) = f(\mathbf{p}^{(r)}) + (\nabla f(\mathbf{p}^{(r)}) )^H (\mathbf{p} - \mathbf{p}^{(r)}),
    \end{equation} 
which is the first-order Taylor expansion of the objective function near the $r$th iteration point. Furthermore, the step size 
$\alpha^{(r)}$ at each iteration is determined via the exact line search method.


\section{Simulation Results}\label{Sec_VI}

In this section, we further validate our conclusions and theoretical analysis by presenting numerical results. All the simulation results are attained by averaging over $1000$ random realizations. 

First, we focus on the P-ACF with PA of 16-QAM and 64-QAM. In Fig. \ref{fig: P-ACF OFDM uniform power simulation theoretical compare}, the sidelobe performances of 16-QAM and 64-QAM constellations for CP-OFDM with uniform PA are shown. It can be observed that the theoretical values align closely with the simulation results, confirming the accuracy of the derivations in Section \ref{The P-ACF Case with power allocation}. Additionally, a 3 dB performance gain in the peak-to-sidelobe level ratio is achieved by doubling $N$. In Fig. \ref{fig: P-ACF OFDM different power compare 64QAM}, the sidelobe performances of the P-ACF for 64-QAM with two PA schemes, uniform PA and random PA, are compared. It is clear that the uniform PA scheme achieves a much lower sidelobe level than the random PA scheme.

\begin{figure}[!t]
	\centering
	\includegraphics[scale = 0.425]{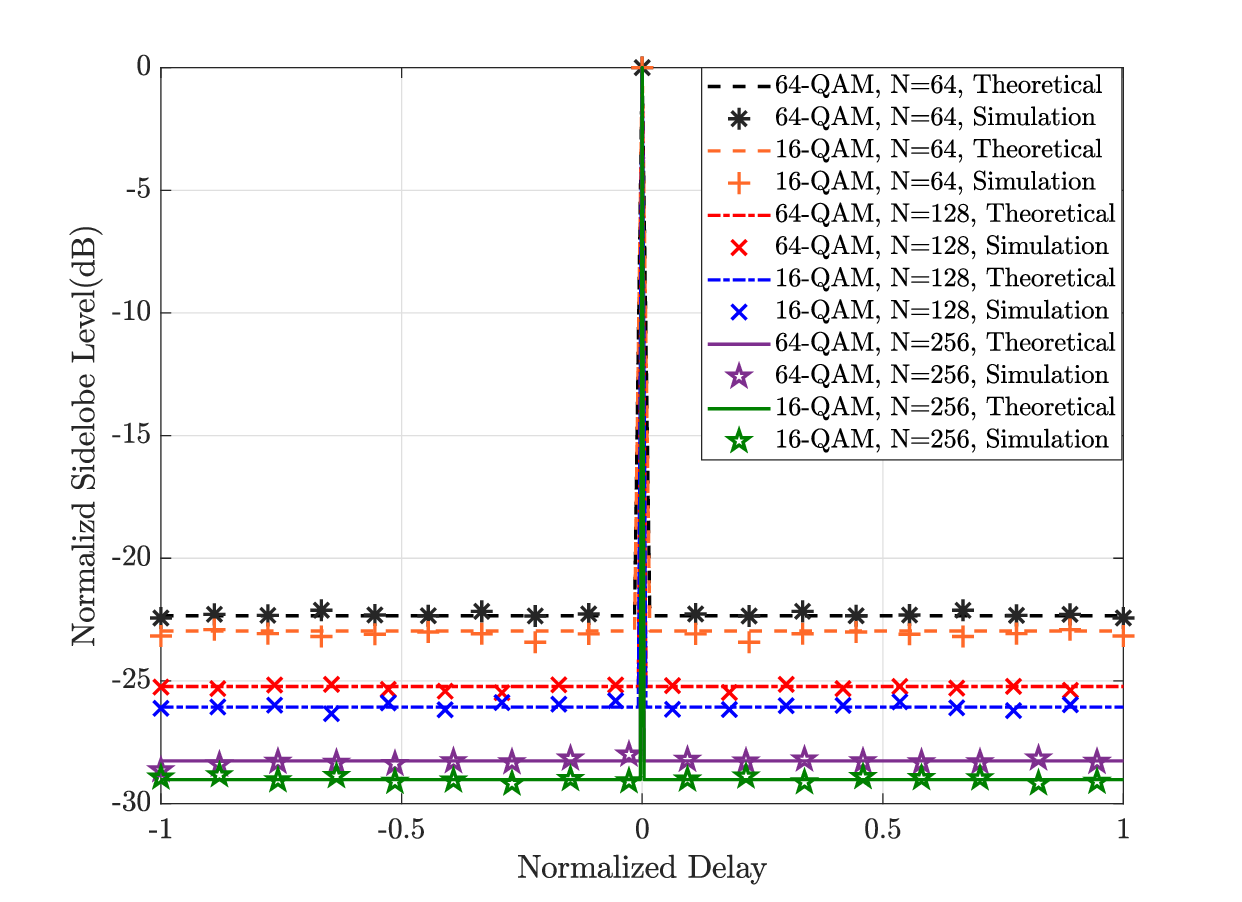} \vspace{-7mm}
	\caption{The P-ACF of 16/64-QAM under OFDM signaling and uniform power with varying $N$.}
    \label{fig: P-ACF OFDM uniform power simulation theoretical compare}
\end{figure} 
\begin{figure}[!t]
	\centering
	\includegraphics[width = \columnwidth]{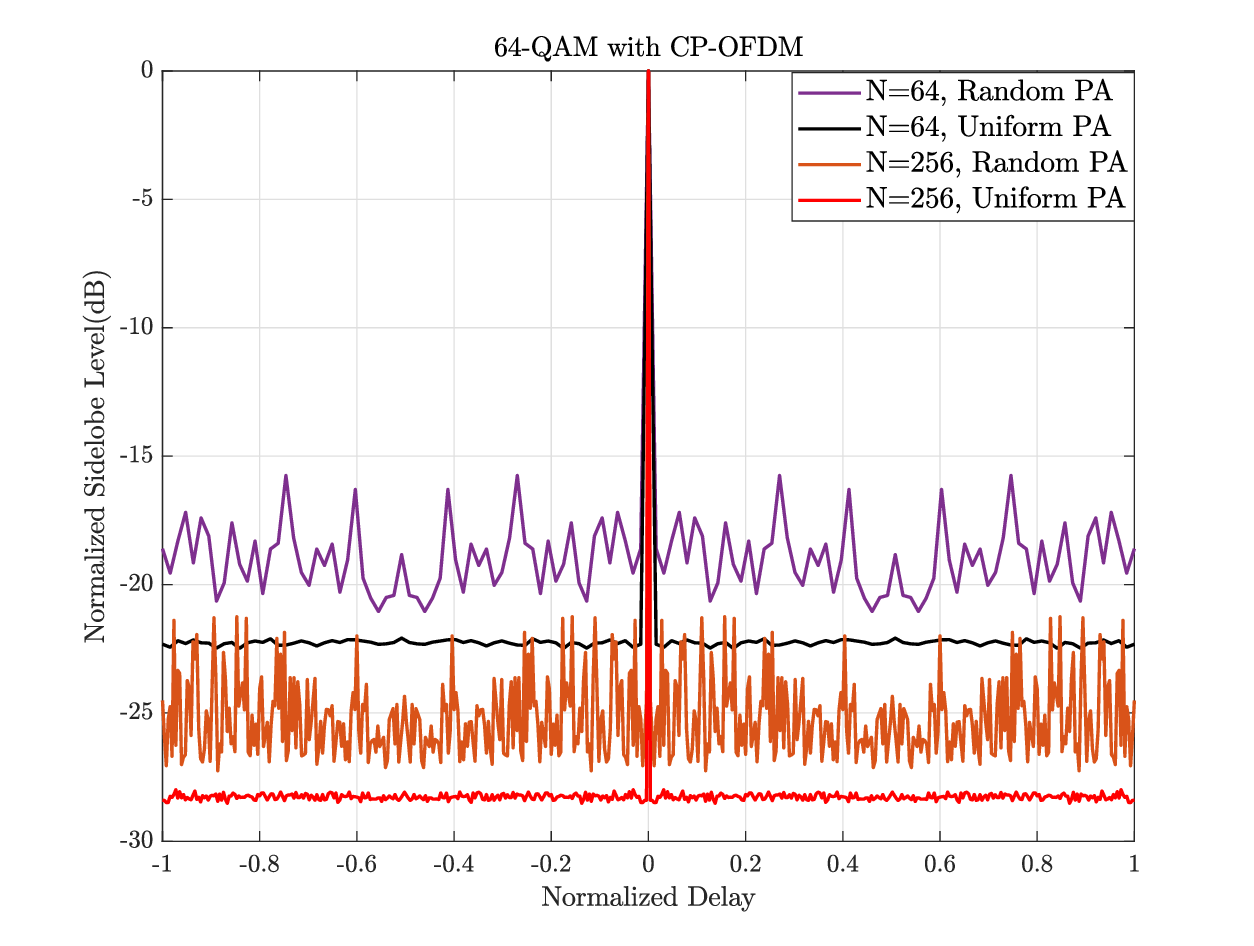} \vspace{-8mm}
	\caption{The P-ACF of 64-QAM under OFDM signaling and uniform power or random power, $N = 64/256$.}
    \label{fig: P-ACF OFDM different power compare 64QAM}
\end{figure}

In addition to the P-ACF, the sidelobe level performance of the A-ACF with PA is shown in Fig. \ref{fig:normalized EISL of A-ACF with power allocation} for the 16-QAM and 16-PSK constellations. We examine the normalized EISL without CP under different PA schemes. We observe that the theoretical results match perfectly with the numerical simulations for both the average and random PA schemes, confirming the accuracy of the theoretical derivations in Section \ref{IV-B}. Additionally, the normalized EISL for the uniform PA scheme is significantly lower than that of the random PA scheme in both the 16-QAM and 16-PSK constellations.

\begin{figure}[!t]
    \centering
    \includegraphics[scale = 0.415]{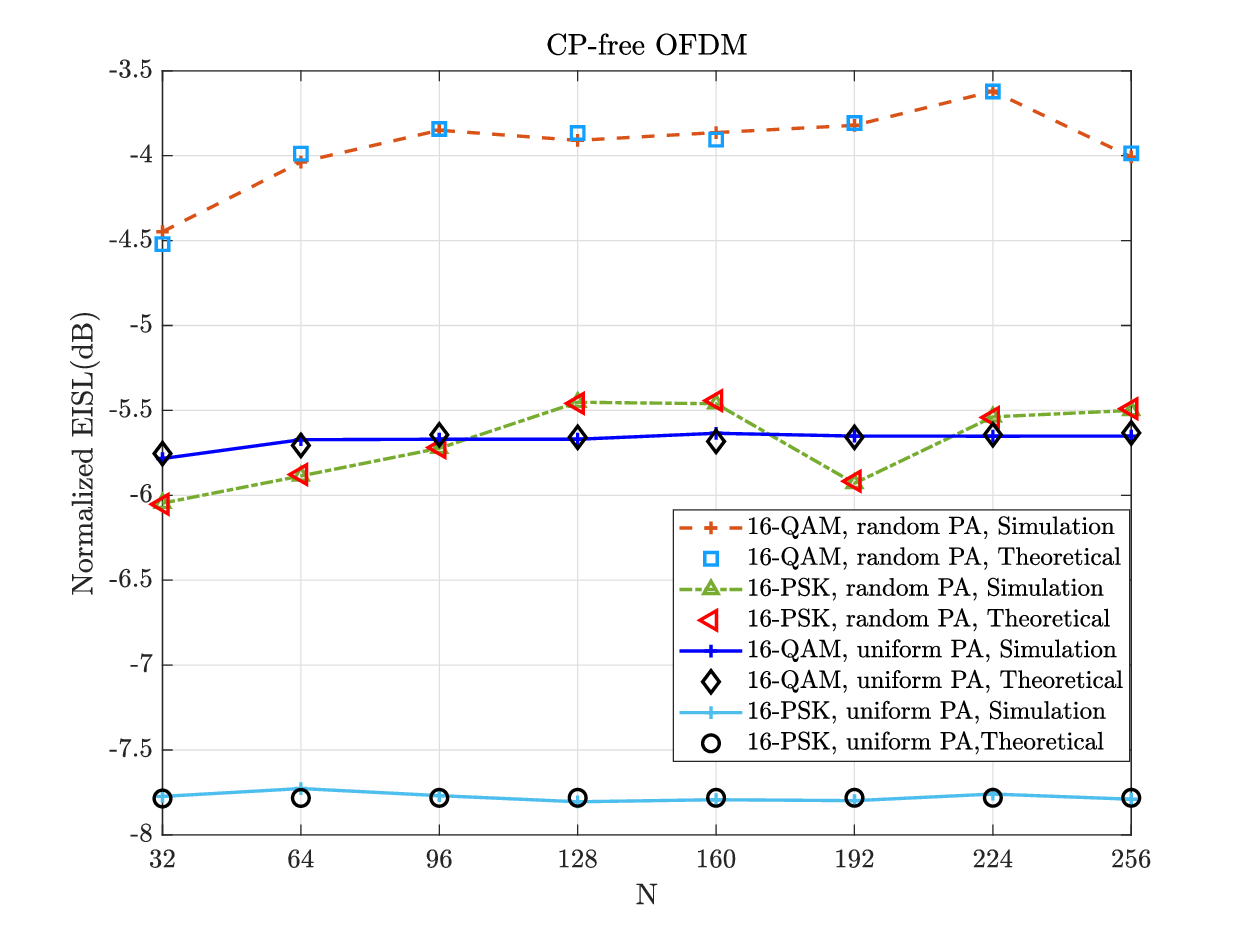}
    \vspace{-8mm}
    \caption{The resultant normalized EISL of A-ACF for 16PSK/16QAM under OFDM signaling with different power allocation and varying number of symbols.}
    \label{fig:normalized EISL of A-ACF with power allocation}
\end{figure}

\begin{figure}[!t]
	\centering
	\includegraphics[scale = 0.425]{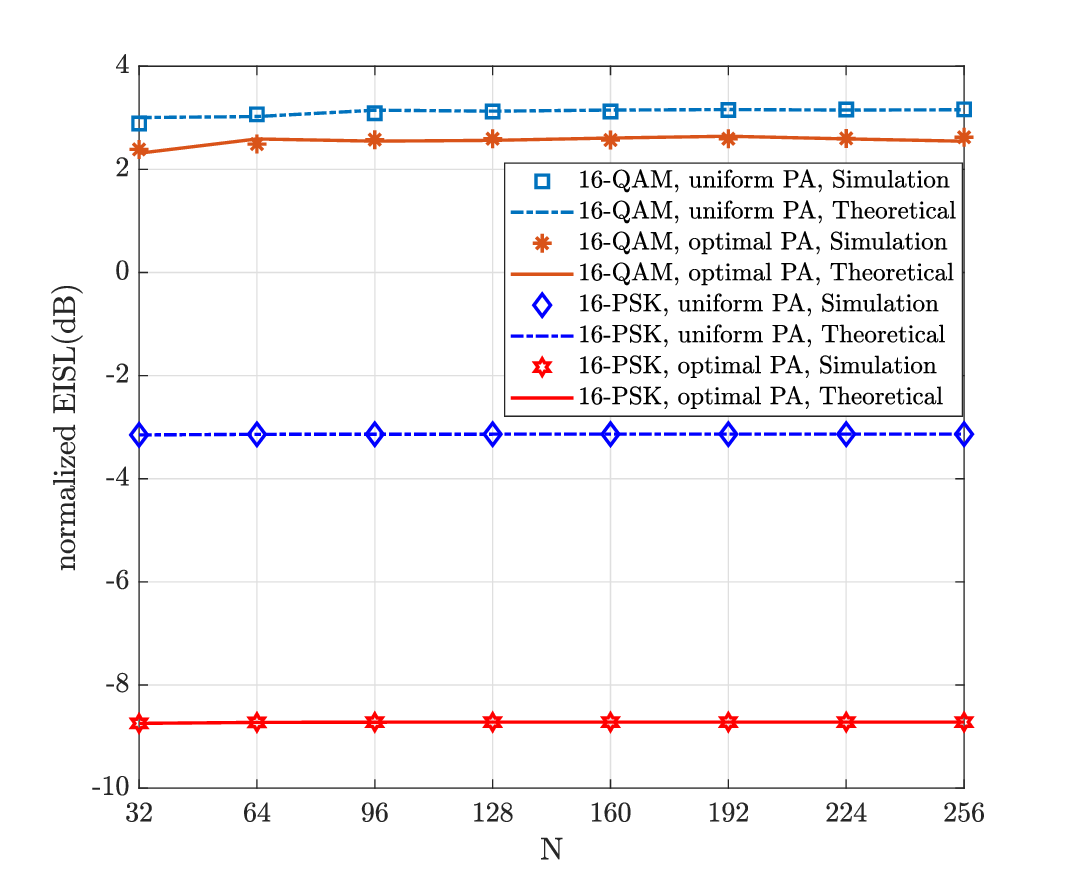}
 \vspace{-4mm}
	\caption{The resultant normalized $\text{EISL}_{\text{ZP}}$ for 16PSK/16QAM and $L=10$ under OFDM signaling with different power allocation and varying number of symbols.}
    \label{fig:P-ACF zero padding power allocation EISL compare}
\end{figure}

To verify the theoretical formulation of the normalized $\text{EISL}_{\text{ZP}}$ in \eqref{normalized hat EISL}, we present comparative simulations of computed 
$\text{EISL}_{\text{ZP}}$ values for different constellation schemes and PA methods in Fig. \ref{fig:P-ACF zero padding power allocation EISL compare}, where the proposed ``optimal PA'' scheme utilizes the power vector $\mathbf{p}$ obtained through Algorithm \ref{alg:PGD Project 2}. The close agreement between theoretical predictions and simulation measurements validates our analytical model. Notably, under the ``optimal PA'' scheme, the normalized $\text{EISL}_{\text{ZP}}$ for 16-PSK is significantly lower than that of the ``uniform PA'' case. For 16-QAM, the reduction is more modest, but the ``optimal PA'' still achieves a lower normalized $\text{EISL}_{\text{ZP}}$ compared to the uniform counterpart.


Next, we examine the sidelobe performances of 16-QAM and 16-PSK constellations with frequency zero-padding and different PA schemes. Fig. \ref{fig:P-ACF 16-QAM power different N} and Fig. \ref{fig:P-ACF 16-PSK power different N} display the P-ACF of 16-QAM and 16-PSK under OFDM signaling across different PA schemes and values of $N$. For both 16-QAM and 16-PSK, the ``optimal PA'' configuration achieves a lower sidelobe level than that of the ``uniform PA''. Analyzing these figures further, while the ``optimal PA'' reduces the sidelobe level, it also slightly increases the mainlobe width compared to the ``uniform PA'' scheme.

\begin{figure}[!t]
	\centering
	\includegraphics[width = \columnwidth]{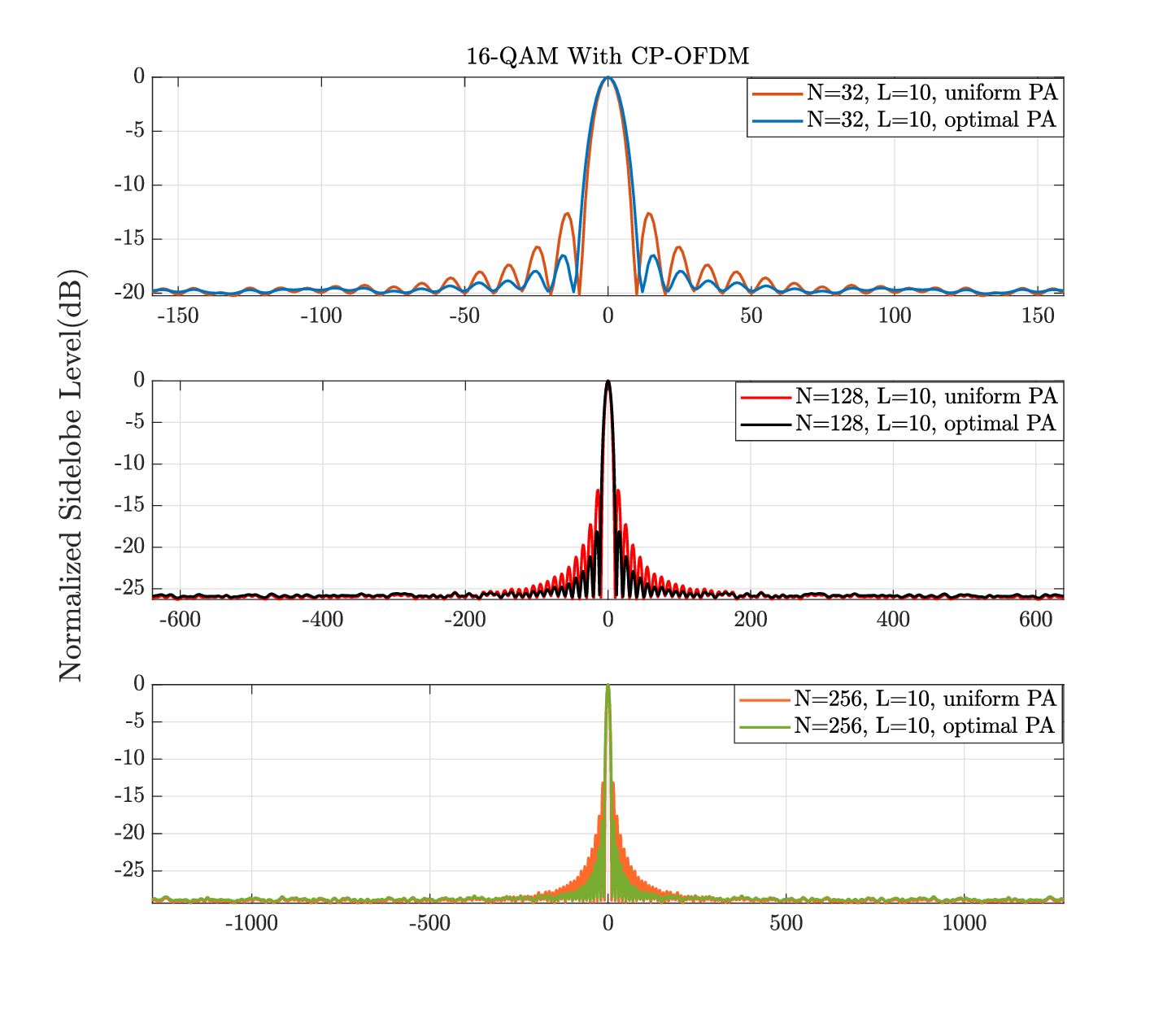}\vspace{-6mm}
	\caption{The P-ACF of 16-QAM under OFDM signaling and different power allocation schemes with varying $N$ and $L=10$.}
    \label{fig:P-ACF 16-QAM power different N}
\end{figure}

\begin{figure}[!t]
	\centering
	\includegraphics[width = \columnwidth]{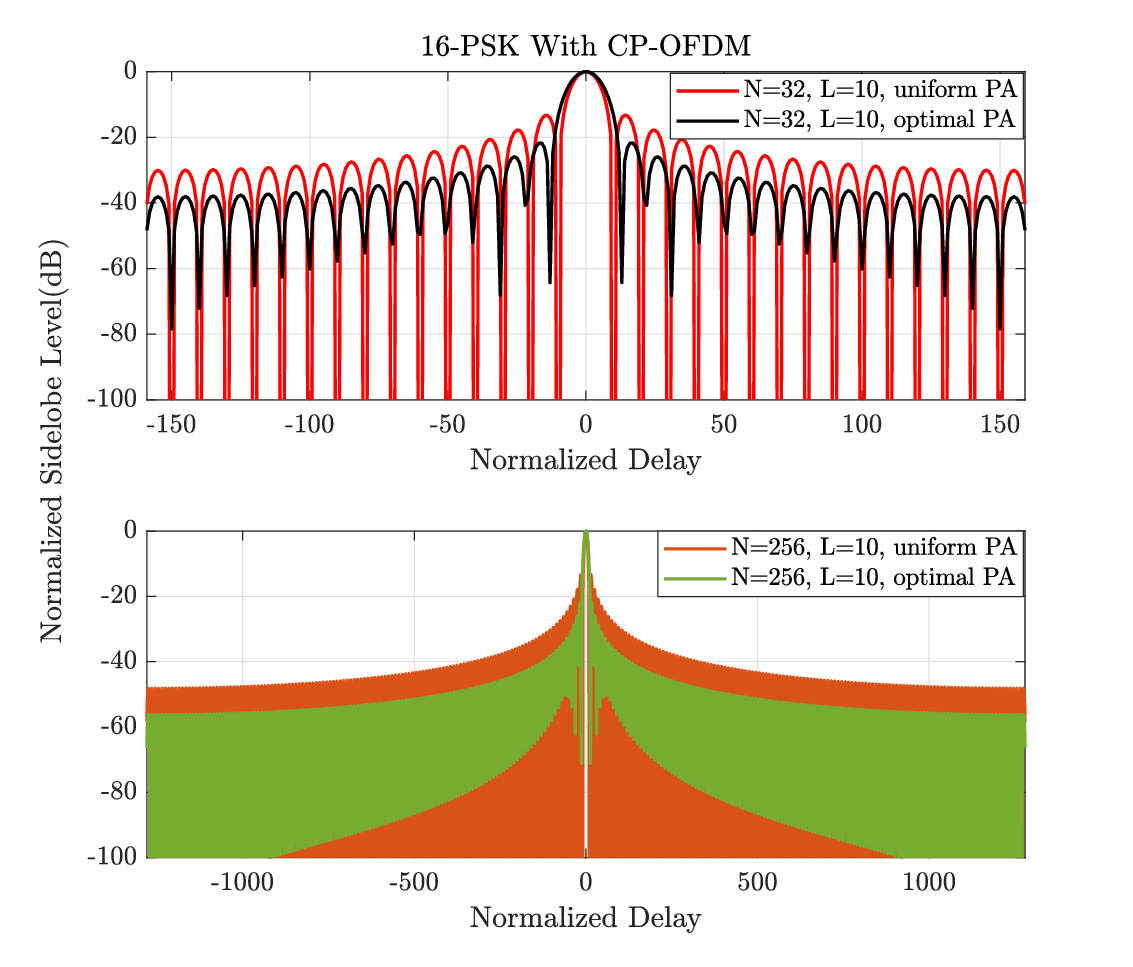}\vspace{-4mm}
	\caption{The P-ACF of 16-PSK under OFDM signaling and different power allocation schemes with varying $N$ and $L=10$.}
    \label{fig:P-ACF 16-PSK power different N}
\end{figure} 

In Fig. \ref{fig: oversampling P compare limit width}, we compare the convergence results of Algorithm \ref{alg:SCA algorithm} for 16-QAM and 16-PSK under different 3 dB mainlobe width constraints, where the limiting positions 
$q$ are set to 
$\frac{L}{2}$, $\frac{L}{2}+2$ and $\frac{L}{2}+4$, respectively. The ``uniform PA'' scheme and the ``unconstrained mainlobe width'' scheme serve as baseline references, where the latter refers to the PA vector obtained from Algorithm \ref{alg:PGD Project 2} without any constraint on mainlobe width. It is observed that as the 3 dB constraint becomes more stringent (i.e., smaller $q$), the convergence result increasingly resembles that of the ``uniform PA'' scheme. Conversely, when the constraint is relaxed (i.e., larger 
$q$), the convergence result tends to align more closely with the ``unconstrained mainlobe width'' case.

\begin{figure}[!t]
    \centering
    \begin{subfigure}
        \centering
        {\includegraphics[scale=0.3]{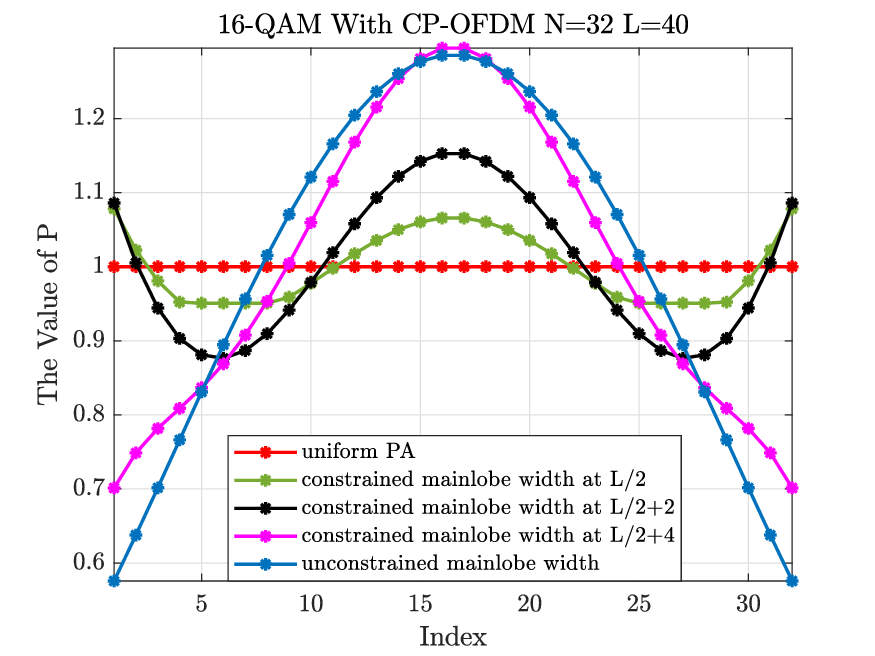}
        \label{fig: sub_figure1 of fig4}}
    \end{subfigure} \hspace{-6mm}
    \centering
    \begin{subfigure}
        \centering
        {\includegraphics[scale=0.3]{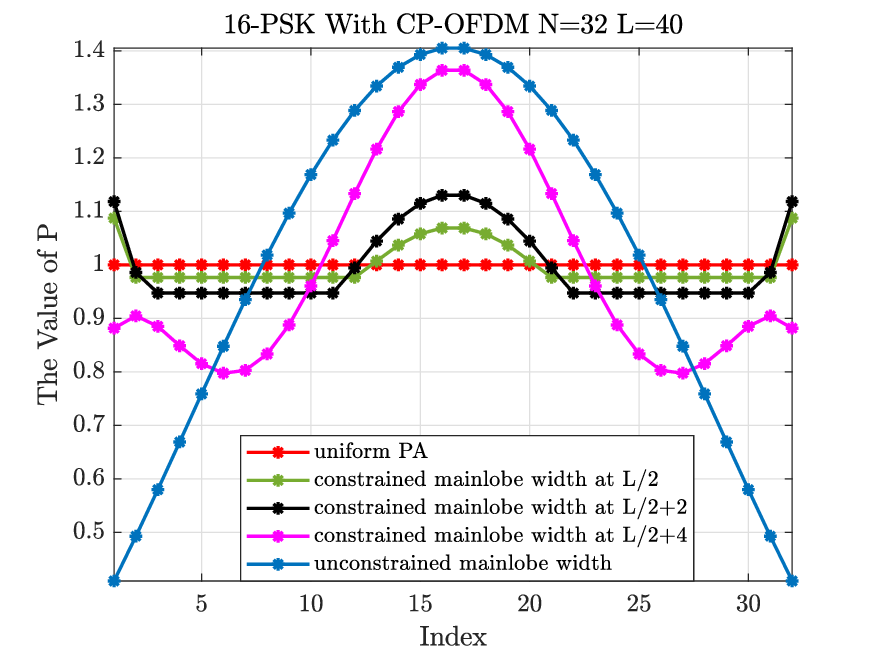}
        \label{fig: sub_figure2 of fig4}}
    \end{subfigure}    
    \centering
    \vspace{-8mm}
    \caption{The convergence results of the SCA algorithm for solving \eqref{objective function power allocation zero-padding with width} in 16-QAM and 16-PSK cases.}
    \label{fig: oversampling P compare limit width}
\end{figure}

\begin{figure}[!t]
	\centering
	\includegraphics[scale=0.4]{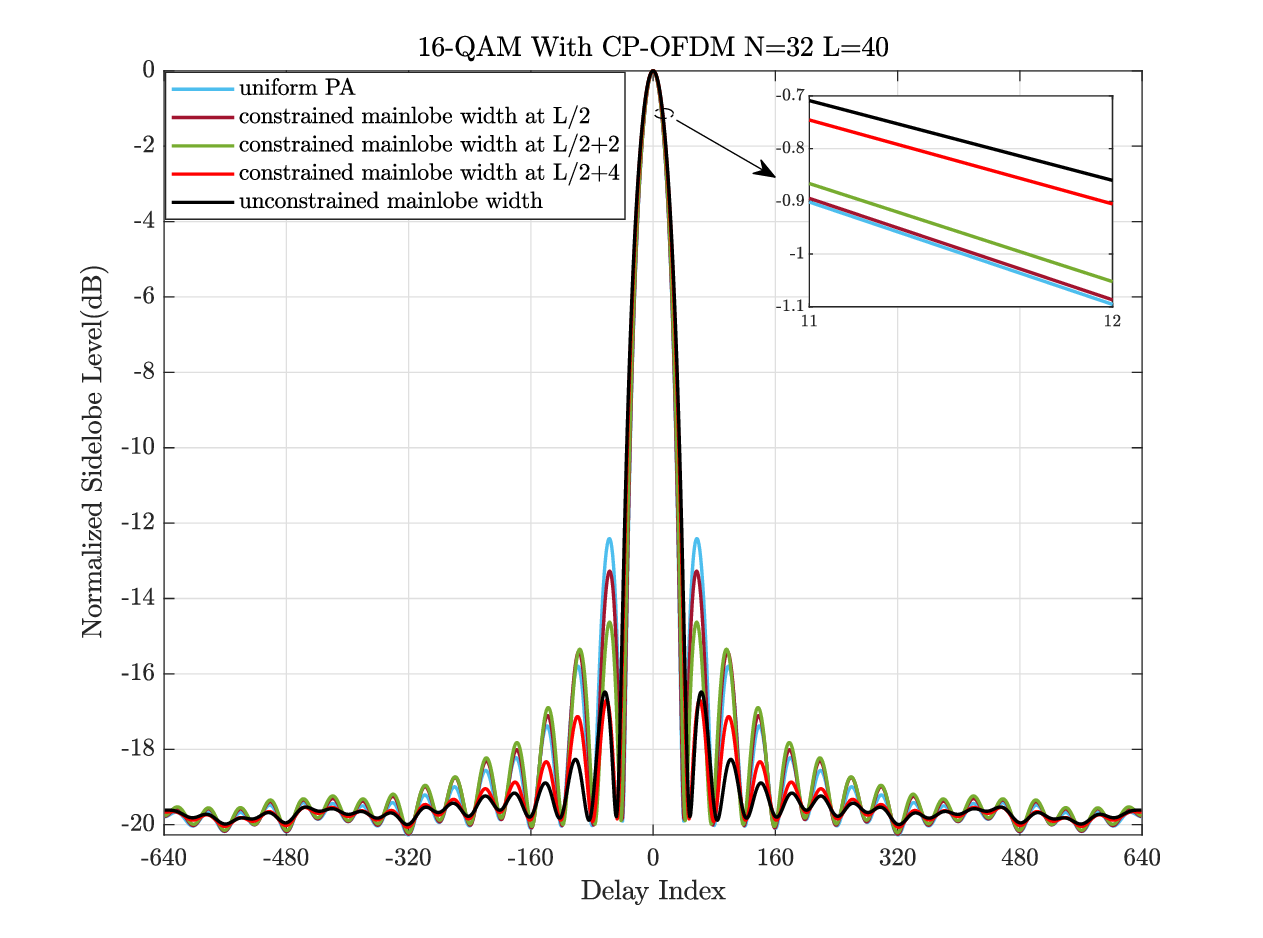} \vspace{-5mm}
	\caption{The P-ACF of 16-QAM under OFDM signaling and different power allocation schemes.}
    \label{fig: oversampling P-ACF compare limit width QAM}
\end{figure} 

\begin{figure}[!t]
	\centering
	\includegraphics[scale=0.42]{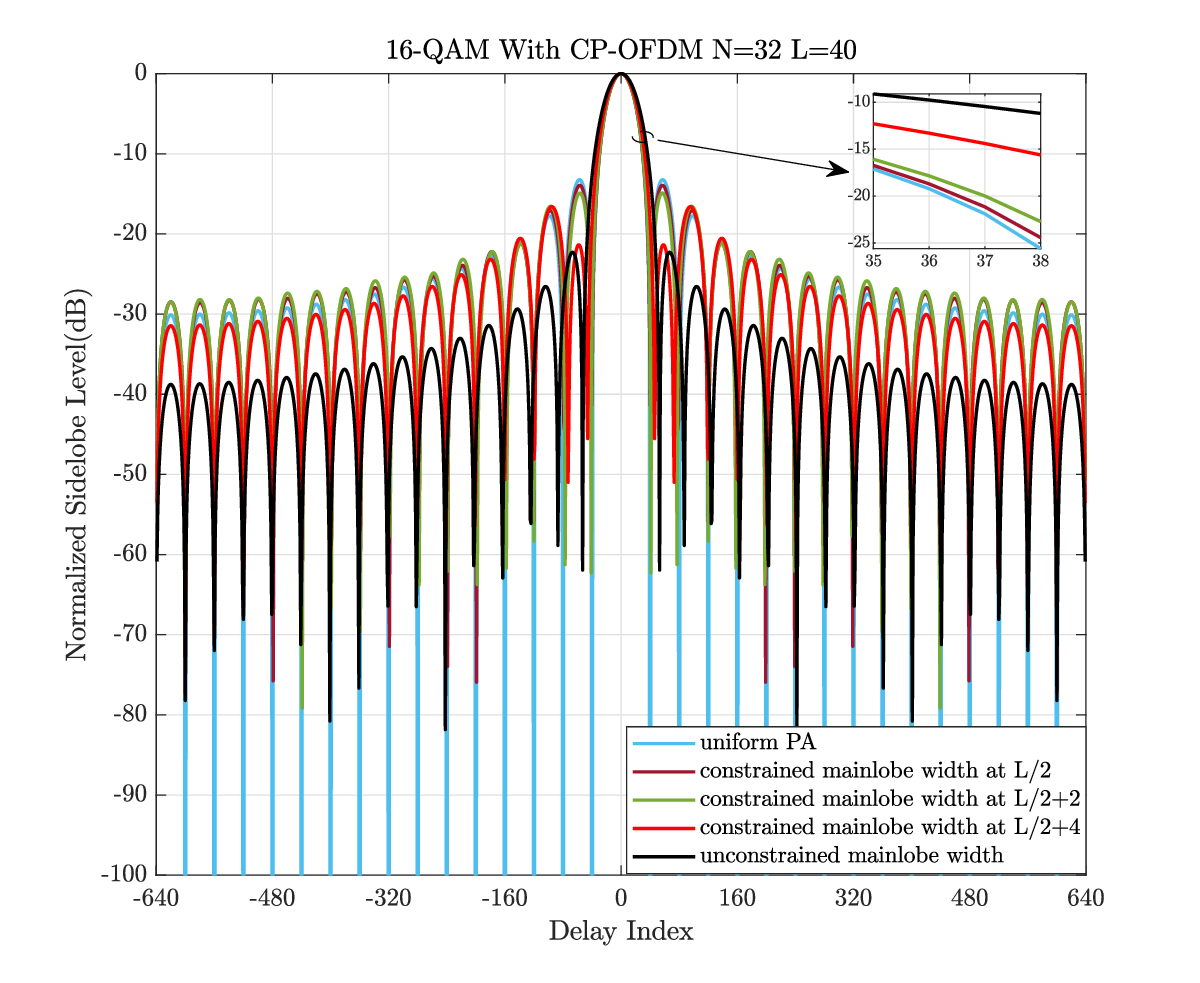} \vspace{-5mm}
	\caption{The P-ACF of 16-PSK under OFDM signaling and different power allocation schemes.}
    \label{fig: oversampling P-ACF compare limit width PSK}
\end{figure} 

Combining Figs. \ref{fig: oversampling P compare limit width}, \ref{fig: oversampling P-ACF compare limit width QAM}, and \ref{fig: oversampling P-ACF compare limit width PSK}, we observe that as the PA scheme approaches the ``uniform PA'' scheme, the mainlobe width becomes narrower, yet the sidelobe level increases. This trend holds for both 16-QAM and 16-PSK constellations. These results suggest a tradeoff between mainlobe width and sidelobe level, allowing for a flexible selection of PA schemes based on specific performance requirements.

\section{Conclusion}\label{Sec_VII}
In this paper, we investigated optimal PA for OFDM-based random ISAC signals. Our findings demonstrated that the uniform PA achieves the global minimum EISL for both periodic and aperiodic autocorrelation functions for all constellations. Additionally, we proved that this scheme achieves the lowest sidelobe at every P-ACF lag. For frequency zero-padding scenarios, where uniform PA is suboptimal, we employed efficient optimization algorithms to design tailored PA schemes that balance the trade-off between sidelobe suppression and mainlobe width. The effectiveness of our theoretical framework was further validated through comprehensive simulations. Future research should explore the optimal signal basis and PA schemes for two-dimensional ambiguity function (2D AF) design.

\appendices
\section{Proof of Proposition 1} \label{Proof of Proposition1}
By referring to \eqref{rk with power allocation-IDFT 2}, the squared P-ACF with power allocation can be expressed as
    \begin{align}\label{AF with power allocation-IDFT 2}
         \left| \widetilde{r}_k \right|^2 \nonumber& = \left( \mathbf{c}^H \mathbf{F}_N \mathbf{J}_k \mathbf{F}_N^H\mathbf{c} \right) \left( \mathbf{c}^H \mathbf{F}_N \mathbf{J}_k \mathbf{F}_N^H\mathbf{c} \right)^* \\
                  \nonumber& = \sum_{n=1}^N \left|c_n\right|^2 e^\frac{j2\pi k(n-1)}{N} \sum_{m=1}^N \left|c_m\right|^2 e^\frac{-j2\pi k(m-1)}{N} \\
                  \nonumber& = \sum_{n=1}^N \sum_{m=1}^N \left|c_n\right|^2\left|c_m\right|^2 e^\frac{-j2\pi k(m-n)}{N} \\
                  & = \sum_{n=1}^N \sum_{m=1}^N P_n P_m \left|s_n\right|^2\left|s_m\right|^2 e^\frac{-j2\pi k(m-n)}{N}.     
    \end{align}
Correspondingly, the average squared P-ACF with PA is 
    \begin{equation}\label{EAF with power allocation-IDFT}
    \begin{split}
         \mathbb{E}(|\widetilde{r}_k|^2)
        & = \mu_4 \sum_{i=1}^N P_i^2 + \sum_{n=1}^N \sum_{\substack{m=1 \\ m\ne n}}^N P_n P_m e^\frac{-j2\pi k(m-n)}{N}\\
        & = (\mu_4-1) \sum_{i=1}^N P_i^2 + \sum_{n=1}^N \sum_{m=1}^N P_n P_m e^\frac{-j2\pi k(m-n)}{N}\\
        & = (\mu_4-1) \sum_{i=1}^N P_i^2 + \left| \sum_{n=1}^N P_n e^\frac{j2\pi kn}{N} \right|^2.
    \end{split}        
    \end{equation}

\section{Proof of Corollary1} \label{Proof of Proposition2}
According to \eqref{EAF with power allocation-IDFT 1}, the EISL of P-ACF with PA can be expressed as
    \begin{equation} \label{EISL OF PACF power}
    \begin{split}
        &\sum_{k=1}^{N-1} \mathbb{E} (|\widetilde{r}_k |^2) 
        = \sum_{k=0}^{N-1} \mathbb{E}(|\widetilde{r}_k|^2)  - \mathbb{E}(|\widetilde{r}_0|^2) \\
        &= \left(N-1\right) \left(\mu_4-1\right) \sum_{i=1}^N P_i^2 
        + \sum_{k=0}^{N-1} \left(\left| \sum_{n=1}^N P_n e^\frac{j2\pi kn}{N} \right|^2 \right) - N^2 .\\
    \end{split}        
    \end{equation}  
Using Parseval's theorem yields
    \begin{equation}
        \sum_{k=0}^{N-1} \left( \left| \sum_{n=1}^N P_n e^\frac{j2\pi kn}{N} \right|^2 \right) =  N \sum_{n=1}^N P_n^2 ,
    \end{equation}
we get \eqref{EAF with power allocation-IDFT sibe}.

\section{Proof of proposition 2} \label{Proof of proposition 3}
According to \eqref{rk with power allocation-IDFT A-ACF 4}
    \begin{align}
        \left| {r}_k \right|^2 
        \nonumber& = \sum_{n=1}^{2N} \left|\mathbf{v}_n^H \mathbf{s} \right|^2 e^{\frac{-j2\pi k(n-1)}{2N}}  \sum_{m=1}^{2N} \left| \mathbf{v}_m^H \mathbf{s} \right|^2 e^{\frac{j2\pi k(m-1)}{2N}} \\
        & = \sum_{n=1}^{2N} \sum_{m=1}^{2N} \left|\mathbf{v}_n^H \mathbf{s}\right|^2 \left|\mathbf{v}_m^H \mathbf{s} \right|^2 e^{\frac{-j2\pi k(n-m)}{2N}}.
    \end{align}
Expanding $\left|\mathbf{v}_n^H \mathbf{s} \right|^2$ yields
    \begin{align}
        |\mathbf{v}_n^H \mathbf{s}|^2 \nonumber& = \mathbf{v}_n^H \mathbf{s} \mathbf{s}^H \mathbf{v}_n = (\mathbf{v}_n^T \otimes \mathbf{v}_n^H) \operatorname{vec}(\mathbf{s}\mathbf{s}^H) \\
        & = (\mathbf{v}_n^T \otimes \mathbf{v}_n^H) \widetilde{\mathbf{s}} = \widetilde{\mathbf{s}}^H (\mathbf{v}_n^* \otimes \mathbf{v}_n). 
    \end{align}
Therefore 
    \begin{equation} \label{Erk2 power allocation A-ACF P-ACF}
    \begin{split}
        \mathbb{E}(|{r}_k|^2) & = \sum_{n=1}^{2N} \sum_{m=1}^{2N} (\mathbf{v}_n^T \otimes \mathbf{v}_n^H) \widetilde{\mathbf{s}} \widetilde{\mathbf{s}}^H (\mathbf{v}_m^* \otimes \mathbf{v}_m) e^{\frac{-j2\pi k(n-m)}{2N}} \\
        & = \sum_{n=1}^{2N} \sum_{m=1}^{2N} (\mathbf{v}_n^T \otimes \mathbf{v}_n^H) \mathbf{S} (\mathbf{v}_m^* \otimes \mathbf{v}_m) e^{\frac{-j2\pi k(n-m)}{2N}},
    \end{split}
    \end{equation}
where $\mathbf{S} = \mathbb{E} ({\widetilde{\mathbf{s}} \widetilde{\mathbf{s}}^H })$. To simplify \eqref{Erk2 power allocation A-ACF P-ACF}, $\mathbf{S}$ can be decomposed \cite{liu2024ofdmachieveslowestranging} as 
    \begin{equation} \label{S}
        \mathbf{S} = \mathbf{I}_{N^2} + \mathbf{S}_1 + \mathbf{S}_2 ,
    \end{equation}
where 
    \begin{align}
        \mathbf{S}_1 &= \mathrm{Diag} \left( \left[\mu_4 - 2, \mathbf{0}_N^T, \mu_4 - 2, \mathbf{0}_N^T,..., \mu_4 - 2 \right]^T \right), \\
        \mathbf{S}_2 &= \left[\mathbf{d}, \mathbf{0}_{N^2 \times N},\mathbf{d},..., \mathbf{d}, \mathbf{0}_{N^2 \times N}, \mathbf{d} \right],
    \end{align}    
with $\mathbf{0}_{N^2\times N}$ being the all-zero matrix of size $N ^2 \times N$, and 
    \begin{equation}
        \mathbf{d} = \left[ 1,\mathbf{0}_N^T,1,...,1,\mathbf{0}_N^T,1 \right]^T.
    \end{equation}
Plugging \eqref{S} into \eqref{Erk2 power allocation A-ACF P-ACF}, we can get
    \begin{align}
       &(\mathbf{v}_n^T \otimes \mathbf{v}_n^H) \mathbf{I}_{N^2} (\mathbf{v}_m^* \otimes \mathbf{v}_m)
        =\left| \mathbf{v}_n^T \mathbf{v}_m^* \right|^2, \label{part1}\\ \nonumber
       &(\mathbf{v}_n^T \otimes \mathbf{v}_n^H) \mathbf{S}_1 (\mathbf{v}_m^* \otimes \mathbf{v}_m) = (\mu_4 - 2) \sum_{p=1}^{N} \left|v_{p,n}\right|^2 \left|v_{p,m}\right|^2\\
        & = (\mu_4 - 2) \| \mathbf{v}_n \odot \mathbf{v}_m \|_{2}^{2} ,\label{part2}\\ \nonumber
       &(\mathbf{v}_n^T \otimes \mathbf{v}_n^H) \mathbf{S}_2 (\mathbf{v}_m^* \otimes \mathbf{v}_m) 
       = \sum_{p=1}^{N} \left|v_{p,n}\right|^2 \sum_{p=1}^{N} \left|v_{p,m}\right|^2 \\ 
       &= \| \mathbf{v}_n \|_{2}^{2} \| \mathbf{v}_m \|_{2}^{2} .\label{part3}
    \end{align}    
According to \eqref{Erk2 power allocation A-ACF P-ACF} - \eqref{part3}, the average mainlobe is
    \begin{equation} \label{mainlobe of power allocation A-ACF P-ACF 2}
    \begin{split}
         \mathbb{E}(|{r}_0|^2) = (\mu_4 -1)\sum_{i=1}^{N}P_{i}^{2} + N^2.
    \end{split}
    \end{equation}

Corresponding to equations \eqref{part1} - \eqref{part3},
$\sum_{k=0}^{2N-1} \mathbb{E}(|{r}_k|^2)$ also can be divided into three terms, as shown in \eqref{Erk2 power allocation A-ACF P-ACF 2}.
\begin{figure*}
    \begin{align} \label{Erk2 power allocation A-ACF P-ACF 2}
        \sum_{k=0}^{2N-1} \mathbb{E}(| r_k |^2) 
        = \sum_{k=0}^{2N-1} \left\{ \sum_{n,m=1}^{2N} | \mathbf{v}_n^T \mathbf{v}_m^* |^2  e^{\frac{-j2\pi k(n-m)}{2N}} 
         +  \sum_{n,m=1}^{2N} (\mu_4 - 2) || \mathbf{v}_n \odot \mathbf{v}_m ||_{2}^{2} e^{\frac{-j2\pi k(n-m)}{2N}} 
         +   \sum_{n,m=1}^{2N} || \mathbf{v}_n ||_{2}^{2} || \mathbf{v}_m ||_{2}^{2}  e^{\frac{-j2\pi k(n-m)}{2N}} \right\}
    \end{align}    
    \hrulefill 
\end{figure*}

\begin{itemize}
    \item 
 The first term of \eqref{Erk2 power allocation A-ACF P-ACF 2} is
    \begin{align}\label{part1 1}
        \nonumber&\sum_{k=0}^{2N-1} \left\{ \sum_{n=1}^{2N} \sum_{m=1}^{2N} \left| \mathbf{v}_n^T \mathbf{v}_m^* \right|^2  e^{\frac{-j2\pi k(n-m)}{2N}} \right\} \\
        \nonumber& = \sum_{n=1}^{2N} \sum_{m=1}^{2N} \left\{ \left| \mathbf{v}_n^T \mathbf{v}_m^* \right|^2  \sum_{k=0}^{2N-1} e^{\frac{-j2\pi k(n-m)}{2N}} \right\} \\    
        & = 2N \sum_{n=1}^{2N} { |\mathbf{v}_n|^4} 
          = 2N \sum_{n=1}^{2N} { \|\mathbf{v}_n\|_{2}^{4}}.
    \end{align}   
    
\item     
The second term of \eqref{Erk2 power allocation A-ACF P-ACF 2} is
    \begin{equation} \label{part2 1}
    \begin{split}
        &\sum_{k=0}^{2N-1} \left\{ \sum_{n=1}^{2N} \sum_{m=1}^{2N} (\mu_4 - 2) \| \mathbf{v}_n \odot \mathbf{v}_m \|_{2}^{2}  e^{\frac{-j2\pi k(n-m)}{2N}} \right\} \\
        & = (\mu_4 - 2) \sum_{k=0}^{2N-1} \left\{ \sum_{n=1}^{2N} \sum_{m=1}^{2N} \| \mathbf{v}_n \odot \mathbf{v}_m \|_{2}^{2}  e^{\frac{-j2\pi k(n-m)}{2N}} \right\} .\\
    \end{split}
    \end{equation}    
We define a vector $\mathbf{b}_k$
    \begin{align} \label{vector bk}
        &\mathbf{b}_k = \left[b_{1,k},b_{2,k},...,b_{N,k}\right]^{T} ,\\
        &b_{p,k} = \sum_{n=1}^{2N} \left|v_{p,n}\right|^2 e^{-j \frac{2\pi k (n-1)}{2N}}, p = 1,2,...,N.
    \end{align}
Using Parseval’s theorem yields
    \begin{align} 
        &\frac{1}{2N} \sum_{k=0}^{2N-1} \left|b_{p,k}\right|^2 = \sum_{n=1}^{2N} \left|v_{p,n}\right|^4, \label{b_p_k ^2 sum}
    \end{align}
and the square of the $\ell_2$ norm of vector $\mathbf{b}_k$ is 
    \begin{align} \label{vector bk^2}
        \| \mathbf{b}_k \|_{2}^{2} \nonumber& = \sum_{p=1}^{N}  \sum_{n=1}^{2N} \sum_{m=1}^{2N} |v_{p,n}|^2 |v_{p,m}|^2 e^{-j \frac{2\pi k (n-m)}{2N}} \\ 
        & = \sum_{n=1}^{2N} \sum_{m=1}^{2N} \| \mathbf{v}_n \odot \mathbf{v}_m \|_{2}^{2} e^{-j \frac{2\pi k (n-m)}{2N}}.
    \end{align}
By combining \eqref{b_p_k ^2 sum} and \eqref{vector bk^2}, we obtain
    \begin{align} \label{vector bk^2 2}
        \nonumber& \sum_{k=0}^{2N-1} \left\{ \sum_{n=1}^{2N} \sum_{m=1}^{2N} \| \mathbf{v}_n \odot \mathbf{v}_m \|_{2}^{2}  e^{\frac{-j2\pi k(n-m)}{2N}} \right\} \\
        & = \sum_{k=0}^{2N-1} \| \mathbf{b}_k \|_{2}^{2} 
        = 2N \sum_{p=1}^{N} \sum_{n=1}^{2N} |v_{p,n}|^4 
        = 2N \|\mathbf{V}\|_{4}^{4}.   
    \end{align}
Plugging \eqref{vector bk^2 2} into \eqref{part2 1}, we arrive at
    \begin{equation} \label{part2 1-1}
        2N(\mu_4 - 2) \|\mathbf{V}\|_{4}^{4}.
    \end{equation}
    
\item  The third term of \eqref{Erk2 power allocation A-ACF P-ACF 2} is
    \begin{align}\label{part3 1}
        \nonumber&\sum_{k=0}^{2N-1} \left\{ \sum_{n=1}^{2N} \sum_{m=1}^{2N} \| \mathbf{v}_n \|_{2}^{2} \| \mathbf{v}_m \|_{2}^{2}  e^{\frac{-j2\pi k(n-m)}{2N}} \right\} \\
        \nonumber& = \sum_{k=0}^{2N-1} 
        \left| \sum_{n=1}^{2N}\|\mathbf{v}_n\|_{2}^{2} e^{\frac{-j2\pi k(n-1))}{2N}} \right|^2\\ 
        &= 2N \sum_{n=1}^{2N} \|\mathbf{v}_n\|_{2}^{4}.
    \end{align}
    
\end{itemize}
Combining \eqref{mainlobe of power allocation A-ACF P-ACF 2}, \eqref{part1 1}, \eqref{part2 1-1} and \eqref{part3 1}, the EISL of the A-ACF is shown in \eqref{EISL of A-ACF power allocation}.


\section{Proof of theorem 3} \label{Proof of theorem 3}
According to \eqref{maiblobe of A-ACF power allocation} and \eqref{EISL of A-ACF power allocation}, the normalized EISL of A-ACF with PA is 
    \begin{equation} \label{normalized EISL of A-ACF with power}
    \begin{split}
        \frac{\frac{1}{2} \sum_{k=1}^{2N-1} \mathbb{E}(|r_k|^2)}{\mathbb{E}(|r_0|^2)} = 
        N\frac{ \sum_{n=1}^{2N} \left[ (\mu_4 - 2)\|\mathbf{v}_n\|_{4}^{4} + 2\|\mathbf{v}_n\|_{2}^{4} \right]}{(\mu_4-1) \sum_{i=1}^{N} P_{i}^2 + N^2} - \frac{1}{2}.
    \end{split}    
    \end{equation}

By observing the structure of $\widetilde{\mathbf{F}}_{2N}$, we can extract all the odd rows and reorganize them into a matrix $\widetilde{\mathbf{F}}_{2N,o}$, while all the even rows can be reorganized as $\widetilde{\mathbf{F}}_{2N,e}$.
    \begin{align} 
        &\widetilde{\mathbf{F}}_{2N,o} = \frac{1}{\sqrt{2}} \mathbf{F}_N , \qquad
        \widetilde{\mathbf{F}}_{2N,e} = \frac{1}{\sqrt{2}} \mathbf{F}_N \mathbf{D}_{\frac{1}{2}} ,
    \end{align}
where $\mathbf{D}_{\alpha}$ is a diagonal matrix with its \textit{n}-th diagonal entry begin ${e^{-j \frac{2\pi (n-1)}{N} \cdot \alpha}}$.

We set $\mathbf{W} = \mathbf{F}_N \widetilde{\mathbf{F}}_{2N}^{H} = \left[ \mathbf{w}_1, \mathbf{w}_2,...,\mathbf{w}_{2N} \right]$, so $\mathbf{V} = \mathbf{B} \mathbf{W} = \left[ \mathbf{B} \mathbf{w}_1, \mathbf{B} \mathbf{w}_2, ... , \mathbf{B} \mathbf{w}_{2N} \right]$. Because of the specificity of the $\widetilde{\mathbf{F}}_{2N}$, we can find that all odd-numbered columns of $\mathbf{W}$ can form an identity matrix, and all even-numbered columns can form a cyclic matrix.
    \begin{align}
        &\mathbf{W}1 = \left[ \mathbf{w}_1, \mathbf{w}_3,...,\mathbf{w}_{2N-1} \right] = \mathbf{F}_N \widetilde{\mathbf{F}}_{2N,o}^{H} = \frac{1}{\sqrt{2}} \mathbf{I}_N, \label{W1} \\
        &\mathbf{W}2 = \left[ \mathbf{w}_2, \mathbf{w}_4,...,\mathbf{w}_{2N} \right] 
        = \mathbf{F}_N \widetilde{\mathbf{F}}_{2N,e}^{H} = \frac{1}{\sqrt{2}} \mathbf{F}_N \mathbf{D}_{\frac{1}{2}}^{H} \mathbf{F}_N^H. \label{W2}
    \end{align}
Similarly, the first term of the RHS of \eqref{normalized EISL of A-ACF with power} can be divided according to odd and even columns. Accordingly, the normalized EISL of A-ACF \eqref{normalized EISL of A-ACF with power} can be rewritten as \eqref{normalized EISL of A-ACF with power 2}.
\begin{figure*}
    \begin{align} \label{normalized EISL of A-ACF with power 2}
        \frac{\frac{1}{2} \sum_{k=1}^{2N-1} \mathbb{E}(|r_k|^2)}{\mathbb{E}(|r_0|^2)} 
        = N\frac{ \sum_{n=1}^{N} \left[ (\mu_4 - 2) \|\mathbf{v}_{2n-1}\|_{4}^{4} + 2\|\mathbf{v}_{2n-1}\|_{2}^{4} \right]}{(\mu_4-1) \sum_{i=1}^{N} P_{i}^2 + N^2} 
        + N\frac{ \sum_{n=1}^{N} \left[ (\mu_4 - 2)\|\mathbf{v}_{2n}\|_{4}^{4} + 2\|\mathbf{v}_{2n}\|_{2}^{4} \right]}{(\mu_4-1) \sum_{i=1}^{N} P_{i}^2 + N^2} - \frac{1}{2} 
    \end{align}    
\end{figure*}

According to \eqref{B} and \eqref{W1}, we can get
    \begin{equation} \label{norm forth power}
        \|\mathbf{v}_{2n-1}\|_{4}^{4} = \|\mathbf{v}_{2n-1}\|_{2}^{4} = \frac{1}{4} P_{n}^{2}, n = 1,2, ... ,N.
    \end{equation}
Plugging \eqref{norm forth power} into the first term of \eqref{normalized EISL of A-ACF with power 2}, this term can be reduced to \eqref{normalized EISL of A-ACF with power 3}. When the value of $\sum_{n=1}^{N} P_{n}^{2}$ is minimized, the value of \eqref{normalized EISL of A-ACF with power 3} is also minimized. According to \eqref{normalize EAF with power allocation-IDFT sibe} and \eqref{value of P}, only when $P_1 = P_2 = ... = P_N = 1$, \eqref{normalized EISL of A-ACF with power 3} reaches the minimum. 
\begin{figure*}
    \begin{align} \label{normalized EISL of A-ACF with power 3}
        \frac{ \sum_{n=1}^{N} \left[ (\mu_4 - 2) \|\mathbf{v}_{2n-1}\|_{4}^{4} + 2\|\mathbf{v}_{2n-1}\|_{2}^{4} \right]}{(\mu_4-1) \sum_{i=1}^{N} P_{i}^2 + N^2} 
        = \frac{\mu_4}{4} \frac{\sum_{n=1}^{N} P_{n}^{2}}{(\mu_4-1) \sum_{i=1}^{N} P_{i}^2 + N^2} 
        =  \frac{\mu_4}{4} \frac{1}{(\mu_4-1) + \frac{N^2}{\sum_{n=1}^{N} P_{n}^{2}}}
    \end{align}    
\end{figure*}

Based on the conditions that $\mu_4 \ge 1$ and $||\mathbf{v}_{n}||_{2}^{4} \ge ||\mathbf{v}_{n}||_{4}^{4}$, the Cauchy-Schwarz Inequality can be used in the second term of \eqref{normalized EISL of A-ACF with power 2}, as is shown in \eqref{normalized EISL of A-ACF with power 4}. Only when every $\sqrt{ (\mu_4 - 2)||\mathbf{v}_{2n}||_{4}^{4} + 2||\mathbf{v}_{2n}||_{2}^{4} }$  $(n = 1,2,...,N)$ is equal, the \eqref{normalized EISL of A-ACF with power 4} reaches the minimum. 
According to \eqref{W2}, when $P_1 = P_2 = ... = P_N = 1$, the conditions for equality are met. 

In conclusion, when $P_1 = P_2 = ... = P_N = 1$, the normalized EISL achieves the global minimum.

\begin{figure*}[ht]
    \begin{align}\label{normalized EISL of A-ACF with power 4}
        N \frac{ \sum_{n=1}^{N} \left[ (\mu_4 - 2) \|\mathbf{v}_{2n}\|_{4}^{4} + 2\|\mathbf{v}_{2n}\|_{2}^{4} \right]}{(\mu_4-1) \sum_{i=1}^{N} P_{i}^2 + N^2} 
        \nonumber &= \sum_{n=1}^{N} \frac{1}{(\mu_4-1) \sum_{i=1}^{N} P_{i}^2 + N^2}
         \times \sum_{n=1}^{N} \left[ (\mu_4 - 2)\|\mathbf{v}_{2n}\|_{4}^{4} + 2\|\mathbf{v}_{2n}\|_{2}^{4} \right] \\
        &\ge \left( \sum_{n=1}^{N} \frac{ \sqrt{(\mu_4 - 2) \|\mathbf{v}_{2n}\|_{4}^{4} + 2\|\mathbf{v}_{2n}\|_{2}^{4} }}{\sqrt{ (\mu_4-1) \sum_{i=1}^{N} P_{i}^2 + N^2 }} \right)^2 
    \end{align}
    \hrulefill 
\end{figure*}

\section{Proof of proposition 3} \label{proof of proposition 4}
According to \eqref{rk with power allocation and zero padding}, the sidelobe of $\hat{r}_k$ is 
    \begin{equation}
    \begin{split}
        |\hat{r}_k|^2 &= \left| \sum_{n=1}^N {P_n}|s_n|^2 e^\frac{j2\pi k(n-1)}{NL} \right|^2 \\
        &= \sum_{n=1}^N \sum_{m=1}^N P_n P_m |s_n|^2 |s_m|^2 e^\frac{j2\pi k(n-m)}{NL}.
    \end{split}
    \end{equation}
The average sidelobe of P-ACF with PA and frequency zero-padding is 
    \begin{equation}
    \begin{split}
         \mathbb{E} (|\hat{r}_k|^2) 
        &= \mu_4 \sum_{i=1}^N P_i^2 + \sum_{n=1}^N \sum_{\substack{m=1 \\ m\neq n}}^N P_n P_m e^\frac{j2\pi k(n-m)}{NL} \\
        &= (\mu_4-1) \sum_{i=1}^N P_i^2 + \left| \sum_{n=1}^N P_n e^\frac{j2\pi kn}{NL} \right|^2.
        \end{split}
    \end{equation}
Taking $k=0$ into the above equation, the average mainlobe of the P-ACF with PA and frequency zero-padding is
    \begin{equation}
        \mathbb{E} (|\hat{r}_0|^2) = (\mu_4-1) \sum_{i=1}^N P_i^2 + N^2.
    \end{equation}

The $\text{EISL}_{\text{ZP}}$ of the P-ACF with PA and frequency zero-padding is 
    \begin{equation} \label{EISL P-ACF power allocation zero-padding}
    \begin{aligned}
        &\sum_{k=L}^{\frac{NL}{2}-1} \mathbb{E} (|\hat{r}_k|^2)  \\
        &= \sum_{k=L}^{\frac{NL}{2}-1} \left[ (\mu_4-1) \sum_{i=1}^N P_i^2 + \left| \sum_{n=1}^N P_n e^\frac{j2\pi kn}{NL} \right|^2 \right] \\
        &= (\frac{NL}{2} - L) (\mu_4-1) \sum_{i=1}^N P_i^2 + \sum_{k=L}^{\frac{NL}{2}-1} \left| \sum_{n=1}^N P_n e^\frac{j2\pi kn}{NL} \right|^2.
    \end{aligned}        
    \end{equation}
For the convenience of writing, we define $\mathbf{g}_k$ and $\mathbf{G}$ as
    \begin{align} 
       \mathbf{g}_k &= 
           \left[ e^\frac{-j2\pi k}{NL},
            e^\frac{-j2\pi 2k}{NL},
            . 
            . 
            . ,
            e^\frac{-j2\pi Nk}{NL} \right]^{T} \label{new vector g_k} ,\\
       \mathbf{G} &= 
        \left [
            \mathbf{g}_L, \mathbf{g}_{L+1}, ... ,\mathbf{g}_{NL/2-1}
        \right ],\label{new G}
    \end{align}
and the size of $\mathbf{G}$ is $N \times (\frac{NL}{2} - L)$.
Plugging \eqref{new vector g_k} and \eqref{new G} into \eqref{EISL P-ACF power allocation zero-padding}, we can get 
    \begin{equation}
    \begin{aligned}
    \sum_{k=L}^{\frac{NL}{2}-1} \left| \sum_{n=1}^N P_n e^\frac{j2\pi kn}{NL} \right|^2 = \sum_{k=L}^{\frac{NL}{2}-1} \left| \mathbf{g}_k^H \mathbf{p} \right|^2
    = \left\| \mathbf{G}^H \mathbf{p} \right\|_2^2,
    \end{aligned}
    \end{equation}
where $ \mathbf{p} = [P_1,P_2,...,P_N]^T$. Correspondingly, \eqref{EISL P-ACF power allocation zero-padding} can be rewritten as
    \begin{equation}
        \sum_{k=L}^{\frac{NL}{2}-1} \mathbb{E} (|\hat{r}_k|^2) = (\frac{NL}{2}-L)(\mu_4-1) \| \mathbf{p} \|_2^2 + \left\| \mathbf{G}^H \mathbf{p} \right\|_2^2.
    \end{equation}

\bibliographystyle{IEEEtran}
\bibliography{reference_zy}

\end{document}